\newif\iftwodates
\renewcommand\maketitle{\begin{titlepage}%
  \pagenumbering{Alph}  % used to avoid complaints about repeated page anchors
  \let\footnotesize\small
  \let\footnoterule\relax
  \let\footnote\thanks
  \null\vfil
  \vskip 30\p@
  \begin{center}%
    {\LARGE \bf \@title \par}%
    \vskip 3em%
    {\large
     \lineskip .75em%
     \begin{tabular}[t]{c}%
       \@author
     \end{tabular}\par}%
     \vskip 1.5em%
  \end{center}\par
  \vfill
  \begin{center}
    \raisebox{1.5cm}{\includegraphics[width=0.58\textwidth]%
      {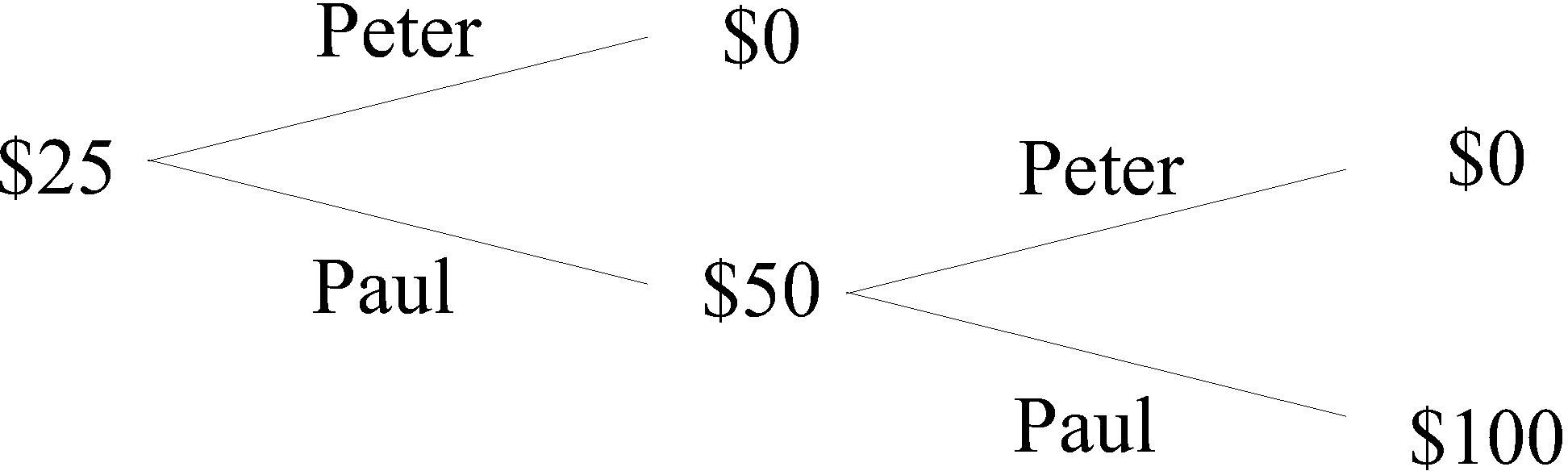}}%
    \hskip 3em%
    \includegraphics[width=0.29\textwidth]%
      {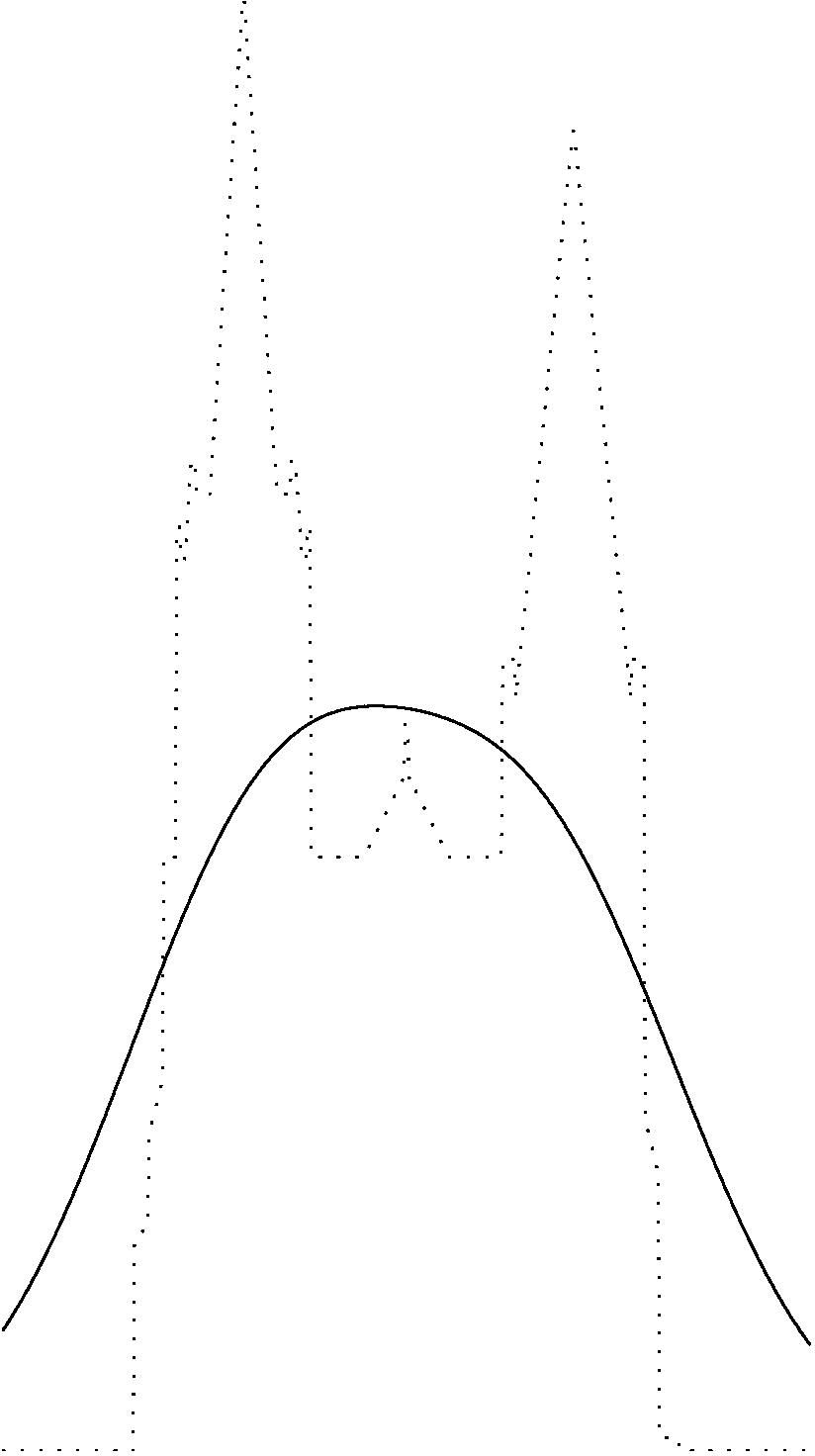}%
  \end{center}
  \@thanks
  \vfill
  \begin{center}
    {\large \bf The Game-Theoretic Probability and Finance Project}
  \end{center}
  \begin{center}
    {\large Working Paper \#\No}
  \end{center}
  \begin{center}
    {\iftwodates\large First posted \firstposted.
    Last revised \@date.\else\large\@date\fi}
  \end{center}
  \begin{center}
    Project web site:\\
    http://www.probabilityandfinance.com
  \end{center}
  \end{titlepage}%
  \setcounter{footnote}{0}%
  \global\let\thanks\relax
  \global\let\maketitle\relax
  \global\let\@thanks\@empty
  \global\let\@author\@empty
  \global\let\@date\@empty
  \global\let\@title\@empty
  \global\let\title\relax
  \global\let\author\relax
  \global\let\date\relax
  \global\let\and\relax
}
\renewenvironment{abstract}{%
  \titlepage\pagenumbering{roman}  % used to avoid complaints about repeated page anchors
  \null\vfil
  \@beginparpenalty\@lowpenalty
  \begin{center}%
    \Large \bfseries \abstractname
    \@endparpenalty\@M
  \end{center}}%
  {\par\vfill\tableofcontents\thispagestyle{empty}\endtitlepage
  \pagenumbering{arabic}}  % used to avoid complaints about repeated page anchors
\renewenvironment{thebibliography}[1]
  {\section*{\refname}%
  \addcontentsline{toc}{section}{\refname}%   Volodya
  \@mkboth{\MakeUppercase\refname}{\MakeUppercase\refname}%
  \list{\@biblabel{\@arabic\c@enumiv}}%
    {\settowidth\labelwidth{\@biblabel{#1}}%
    \leftmargin\labelwidth
    \advance\leftmargin\labelsep
    \@openbib@code
    \usecounter{enumiv}%
    \let\p@enumiv\@empty
    \renewcommand\theenumiv{\@arabic\c@enumiv}}%
    \sloppy
    \clubpenalty4000
    \@clubpenalty \clubpenalty
    \widowpenalty4000%
    \sfcode`\.\@m}
    {\def\@noitemerr
    {\@latex@warning{Empty `thebibliography' environment}}%
  \endlist}
  \newcommand{\Extra}[1]{}
\newcommand*{\st}{\mathrel{|}}
\newcommand*{\dd}{\,\mathrm{d}}
\newcommand*{\K}{\mathcal{K}}
\newcommand*{\FFF}{\mathcal{F}}
\newcommand{\Wiener}{\EuScript{W}}
\newcommand*{\Alpha}{\mathrm{A}}
\newcommand*{\Mu}{\mathrm{M}}
\DeclareMathOperator{\III}{\boldsymbol{1}}
\newcommand*{\bbbp}{\mathbb{P}}
\DeclareMathOperator{\Prob}{\bbbp}
\DeclareMathOperator{\UpProb}{\overline{\bbbp}}
\newcommand{\pretend}[2]{\smash{\mathrlap{#1}}\phantom{#2}}
\DeclareMathOperator{\UpProbI}{{\textstyle\pretend{\overline{\bbbp}}{\bbbp}^{\it I}}}
\DeclareMathOperator{\LowProbI}{{\textstyle\pretend{\underline{\bbbp}}{\bbbp}^{\it I}}}
\newcommand{\bbbe}{\mathbb{E}}
\DeclareMathOperator{\Expect}{\bbbe}
\DeclareMathOperator{\UpExpect}{\overline{\bbbe}}
\DeclareMathOperator{\UpExpectI}{\textstyle\pretend{\overline{\bbbe}}{\bbbe}^{\it I}}
\newcommand*{\bbbr}{\mathbb{R}}
\theoremstyle{plain}
\newtheorem{theorem}{Theorem}[section]
\newtheorem{proposition}[theorem]{Proposition}
\newtheorem{corollary}[theorem]{Corollary}
\newtheorem{lemma}[theorem]{Lemma}
\theoremstyle{definition}
\newtheorem{remark}[theorem]{Remark}
  \title{A probability-free and continuous-time explanation of the equity premium and CAPM}
  \author{Vladimir Vovk and Glenn Shafer}
  \date{July 4, 2016}
  \newcommand*{\No}{44}
  \newcommand*{\firstposted}{June 23, 2016}
\begin{document}
\maketitle

  \begin{abstract}
    This paper gives yet another definition of game-theoretic probability
    in the context of continuous-time idealized financial markets.
    Without making any probabilistic assumptions
    (but assuming positive and continuous price paths),
    we obtain a simple expression for the equity premium
    and derive a version of the capital asset pricing model.

      \bigskip

      \bigskip

      \noindent
      The version of this paper at \url{http://probabilityandfinance.com} (Working Paper 44)
      is updated most often.
  \end{abstract}

\section{Introduction}

This paper reviews and extends previous work
in which we derived the existence of an equity premium
and the validity of a Capital Asset Pricing Model (CAPM)
from a purely game-theoretic hypothesis of market efficiency,
without assuming the existence of probabilities for security prices.

For simplicity, we consider only two securities,
a stock $S$ and a traded market index $I$.
We also make the following simplifying assumptions:
\begin{itemize}
\item
  Trading in $S$ and $I$ continues indefinitely.
  (The time horizon is infinite.)
\item
  The prices of $S$ and $I$ are always positive and continuous.  
\item
  The interest rate is zero.
\end{itemize}
All these assumptions can be relaxed.

Our mathematical results have a practical interpretation if one adopts
the hypothesis that the index $I$ is efficient, in the sense that a 
strategy for trading in $S$ and $I$ will not multiply the 
capital it risks by a factor many times larger than what would be achieved
by buying and holding $I$.  We call this the \emph{Efficient Index Hypothesis
(EIH) for $I$}.  

A typical mathematical result in this paper asserts the existence of a trading
strategy that will multiply the 
capital it risks by a factor many times larger than what would be achieved
by buying and holding $I$ unless the price trajectories of
$S$ and $I$ have a particular property.  
Here are some properties we consider:
\begin{itemize}
\item
  $I$ grows at a rate determined by 
  its volatility.  (This is the equity premium.)
\item
  $I$ has the properties of 
  geometric Brownian motion when time is appropriately rescaled.
\item
  $S$ obeys a CAPM with respect to $I$.  
\end{itemize}
In each case, we prove the existence of a trading strategy that beats $I$
by a large factor if the property does not hold.
If you subscribe to the EIH for $I$, then you expect the property to hold.

Our EIH is explained more fully in Section~\ref{sec:EIH},
where we specify the trading strategies we consider,
define an extended class of approximate capital processes (supermartingales),
and state the associated definition of upper probability.
An upper probability measures how little initial capital must be risked
to obtain unit capital if an event happens and thus how unlikely that event is.

We study the index $I$ in Sections~\ref{sec:existence-1}--\ref{sec:EPM}.
In Section~\ref{sec:existence-1}
we define $I$'s cumulative growth rate and relative quadratic variation;
these exist in a strong sense under the EIH:
the trader can become infinitely rich as soon as they cease to exist.
In Sections~\ref{sec:EP} and~\ref{sec:EPM}
we consider strategies for trading in $I$
and show that under the EIH it grows at a rate determined by its relative quadratic variation.
This growth is the equity premium.

Section~\ref{sec:existence-2} continues Section~\ref{sec:existence-1}
by defining quantities involving both $S$ and $I$.
Section~\ref{sec:CAPM} then derives a CAPM that relates these quantities to each other
and includes most of our results about the equity premium as special cases.

One purpose of this paper is to clarify the relation between two different methods
that we used in previous work.
We first established a probability-free CAPM fifteen years ago;
essentially, our version was the conjunction of a probability-free version of the standard CAPM
and a probability-free expression for the equity premium.
We did this first in discrete time \cite{GTP1},
and then we extended the argument to continuous time using nonstandard analysis \cite{GTP2}.
The method used in those papers involved mixing, in a certain sense,
the price paths of $S$ and $I$ (in the case of CAPM)
or mixing $I$ and cash (in the case of the equity premium).
Ten years later, without using nonstandard analysis,
one of us derived a probability-free version of the Dubins--Schwarz theorem \cite{GTP28},
effectively reducing the probability-free setting to the Bachelier model,
which for positive prices becomes the Black--Scholes model after a time change.
And the Black--Scholes model allows us to use standard probabilistic tools,
including Girsanov's theorem,
to obtain a stronger form of our version of the CAPM \cite{GTP38,GTP39-local,Vovk:arXiv1111}.

In this paper we apply and compare the two methods,
mixing \cite{GTP1,GTP2} and probabilistic \cite{GTP38,GTP39-local,Vovk:arXiv1111},
implementing them both without using nonstandard analysis.
In Section~\ref{sec:EPM}, we study the equity premium using the mixing method,
and in Section~\ref{sec:EP}, we study it % the equity premium
using the probabilistic method in combination with the probability-free Dubins--Schwarz theorem.
The results from the probabilistic method are stronger than those from the mixing method,
in the sense that they assert higher lower probabilities
for the approximations formalizing the equity premium phenomenon,
but the difference is not great.
Since the probability-free Dubins--Schwarz theorem is only applicable to one security,
we cannot apply the probabilistic method to the CAPM,
which involves both $S$ and $I$.
Therefore, we use the mixing method to obtain our version of the CAPM in Section~\ref{sec:CAPM}.

\section{The Efficient Index Hypothesis}
\label{sec:EIH}

The sample space of this paper is the set $\Omega$
of all pairs $\omega=(I,S)$ of positive continuous functions
$I:[0,\infty)\to(0,\infty)$ and $S:[0,\infty)\to(0,\infty)$.
Each $\omega=(I,S)\in\Omega$ will be identified with the function
$\omega:[0,\infty)\to(0,\infty)^2$
defined by $\omega(t):=(I(t),S(t))$, $t\in[0,\infty)$.
Intuitively, $I$ is the price path of an index and $S$ is that 
of a stock or another financial security.
We assume, for simplicity, that $I(0)=1$.

We equip $\Omega$ with the $\sigma$-algebra $\FFF$
generated by the functions $\omega\in\Omega\mapsto\omega(t)$, $t\in[0,\infty)$
(i.e., the smallest $\sigma$-algebra making them measurable).
We often consider subsets of $\Omega$ and functions on $\Omega$
that are measurable with respect to $\FFF$.
As shown in \cite{GTP43}, the requirement of measurability is essential:
without measurability, it is too easy to become infinitely rich infinitely quickly.

An \emph{event} is an arbitrary subset of $\Omega$
(we will add the qualifier ``$\FFF$-measurable'' when needed),
a \emph{random vector} is an $\FFF$-measurable function of the type $\Omega\to\bbbr^d$ for some $d\in\{1,2,\ldots\}$,
and an \emph{extended random variable} is an $\FFF$-measurable function of the type $\Omega\to[-\infty,\infty]$.
A \emph{stopping time} is an extended random variable $\tau:\Omega\to[0,\infty]$ such that,
for all $\omega$ and $\omega'$ in $\Omega$,
\begin{equation*}
  \left(
    \omega|_{[0,\tau(\omega)]}
    =
    \omega'|_{[0,\tau(\omega)]}
  \right)
  \Longrightarrow
  \tau(\omega)=\tau(\omega'),
\end{equation*}
where $f|_A$ stands for the restriction of $f$ to the intersection of $A$ and $f$'s domain.
A random vector $X$ is said to be \emph{$\tau$-measurable},
where $\tau$ is a stopping time,
if,
for all $\omega$ and $\omega'$ in $\Omega$,
\begin{equation*}
  \left(
    \omega|_{[0,\tau(\omega)]}
    =
    \omega'|_{[0,\tau(\omega)]}
  \right)
  \Longrightarrow
  X(\omega)=X(\omega').
\end{equation*}
As customary in probability theory,
we will often omit explicit mention of $\omega\in\Omega$
when it is clear from the context.

A \emph{simple trading strategy} $G$ is a pair $((\tau_1,\tau_2,\ldots),(h_1,h_2,\ldots))$,
where:
\begin{itemize}
\item
  $\tau_1\le\tau_2\le\cdots$ is a nondecreasing sequence of stopping times
  such that, for each $\omega\in\Omega$,
  $\lim_{n\to\infty}\tau_n(\omega)=\infty$;
\item
  for each $n=1,2,\ldots$, $h_n$ is a bounded $\tau_{n}$-measurable $\bbbr^2$-valued random vector.
\end{itemize}
A \emph{process} is a function $X:[0,\infty)\times\Omega\to[-\infty,\infty]$.
The \emph{simple capital process} $\K^{G,c}$
corresponding to a simple trading strategy $G$
and \emph{initial capital} $c\in\bbbr$ is defined by
\begin{equation*}
  \K^{G,c}_t(\omega)
  :=
  c
  +
  \sum_{n=1}^{\infty}
  h_n(\omega)\cdot
  \bigl(
    \omega(\tau_{n+1}\wedge t)-\omega(\tau_n\wedge t)
  \bigr),
  \quad
  t\in[0,\infty),
  \enspace
  \omega\in\Omega,
\end{equation*}
where ``$\cdot$'' stands for dot product and the zero terms in the sum are ignored
(which makes the sum finite for each $t$).

The vector $h_n(\omega)$ tells the trader how many units of $I$ and $S$ to 
hold between time $\tau_n(\omega)$ and 
$\tau_{n+1}(\omega)$, and 
thus $\K^{G,c}_t(\omega)$ is his \emph{capital} at time $t$.
Negative components for $h_n$ indicate short selling.  
Because $I$ and $S$ are continuous, a strategy $G$ can sell
them short and yet produce a nonnegative simple capital 
process $\K^{G,c}$; the $\tau_n$ and $h_n$ can be chosen so that
the short selling always stops before
$\K^{G,c}_t(\omega)$ gets below zero.

For $\omega=(I,S)$ and $t\in[0,\infty)$, we often let $I_t(\omega)$ stand for $I(t)$
and $S_t(\omega)$ for $S(t)$;
since we often omit $\omega$,
this makes $I_t$ (resp.\ $S_t$) synonymous with $I(t)$ (resp.\ $S(t)$).
We will often use the generic notation $\omega=(I,S)$.

Let us say that a class $\mathcal{C}$ of processes is \emph{$\liminf$-closed}
if the process
\begin{equation}\label{eq:PP}
  X_t(\omega)
  :=
  \liminf_{k\to\infty}
  X^k_t(\omega)
\end{equation}
is in $\mathcal{C}$ whenever each process $X^k$ is in $\mathcal{C}$.
A nonnegative process $X$
is a \emph{test supermartingale} if it belongs to the smallest $\liminf$-closed class of processes
containing all nonnegative simple capital processes.
Intuitively, test supermartingales are nonnegative capital processes
(as they can be approximated by nonnegative simple capital processes;
in fact, they can lose capital as the approximation is in the sense of $\liminf$).
We call processes of the type $X_t(\omega) / I_t(\omega)$, 
where $X$ is a test supermartingale,
\emph{test $I$-supermartingales};
they are like test supermartingales but use $I$ as the num\'eraire.

The initial value $X_0$ of a test supermartingale $X$ is always a 
constant.  Given a subset
$E$ of $\Omega$, we set
\begin{equation}\label{eq:upper-probability}
  \UpProb(E)
  :=
  \inf
  \bigl\{
    X_0
    \bigm|
    \forall\omega\in\Omega:
    \liminf_{t\to\infty}
    X_t(\omega)
    \ge
    \III_E(\omega)
  \bigr\}
\end{equation}
and
\begin{equation}\label{eq:upper-probability-I}
  \UpProbI(E)
  :=
  \inf
  \bigl\{
    X_0
    \bigm|
    \forall\omega\in\Omega:
    \liminf_{t\to\infty}
    X_t(\omega) / I_t(\omega)
    \ge
    \III_E(\omega)
  \bigr\},
\end{equation}
$X$  ranging in each case over the test supermartingales.%
\footnote{Here, as always in game-theoretic probability,
  upper probability is a special case of upper expected value.
  Upper expected values $\UpExpect(F)$  and $\UpExpectI(F)$,
  where $F:\Omega\to[0,\infty]$,
  are defined by substituting the function $F$
  for $\III_E$ in~\eqref{eq:upper-probability} and \eqref{eq:upper-probability-I}, respectively.
  We do not use $\UpExpectI$ in this paper but do use $\UpExpect$ on one occasion.}            
We call $\UpProb(E)$ $E$'s \emph{upper probability},
and we call $\UpProbI(E)$ its $I$-\emph{upper probability}.
The definition~\eqref{eq:upper-probability-I} can be rewritten as
\begin{equation}\label{eq:upper-probability-II}
  \UpProbI(E)
  =
  \inf
  \bigl\{
    X_0
    \bigm|
    \forall\omega\in\Omega:
    \liminf_{t\to\infty}
    X_t(\omega)
    \ge
    \III_E(\omega)
  \bigr\},
\end{equation}
$X$ ranging over the test $I$-supermartingales.

Recalling that $I_0(\omega)=1$ for all $\omega\in\Omega$,
we see from~\eqref{eq:upper-probability-I} that a value of $\UpProbI(E)$ close to zero
indicates the existence of a trading strategy that beats the index $I$
by a large factor if $E$ happens.
The EIH for $I$ says that we should not expect to beat $I$ by a large factor,
and so we should not expect $E$ to happen.
The EIH for $E$ has a lot of empirical support
when $I$ is an index, such as the S\&P500,
which can be approximately traded with low transaction costs;
see, e.g., \cite{Malkiel:1995,Malkiel:2016}.

We do not interpret small values of $\UpProb(E)$ in the same way.
Saying that $E$ will not happen when $\UpProb(E)$ is small
would amount to adopting an efficiency hypothesis for cash or for a bank account
(recall that the interest rate is zero)---%
i.e., to asserting that no trading strategy will beat holding cash by a large factor.
We do not assert this.
In fact, the EIH for nontrivial $I$ implies the opposite.
It implies that we can expect holding $I$ to beat holding cash by an infinite factor
as time goes to infinity;
this is a consequence of our results for the equity premium in Sections~\ref{sec:EP} and~\ref{sec:EPM}.
(The efficiency hypothesis for cash, in contrast,
implies that the price of $I$, or any other traded security, will tend to a constant.
See Theorem~3.1 in \cite{GTP28}.)

\begin{remark}\label{rem:supermartingales}
  An equivalent definition of the class $\mathcal{C}$ of test supermartingales
  can be given using transfinite induction over the countable ordinals $\alpha$
  (see, e.g., \cite{Dellacherie/Meyer:1978}, 0.8).
  Namely, define $\mathcal{C}^{\alpha}$ as follows:
  \begin{itemize}
  \item
    $\mathcal{C}^0$ is the class of all nonnegative simple capital processes;
  \item
    for $\alpha>0$,
    $X\in\mathcal{C}^{\alpha}$ if and only if there exists a sequence $X^1,X^2,\ldots$
    of processes in $\mathcal{C}^{<\alpha}:=\cup_{\beta<\alpha}\mathcal{C}^{\beta}$
    such that \eqref{eq:PP} holds.
  \end{itemize}
  It is easy to check that the class of all test supermartingales
  is the union of the nested family $\mathcal{C}^{\alpha}$
  over all countable ordinals $\alpha$.
  The \emph{class} of a test supermartingale $X$
  is defined to be the smallest $\alpha$ such that $X\in\mathcal{C}^{\alpha}$;
  in this case we will also say that $X$ is \emph{of class $\alpha$}.
\end{remark}

\begin{remark}
  The hierarchy $(\mathcal{C}^{\alpha})$ described in Remark~\ref{rem:supermartingales}
  is somewhat analogous to the Baire hierarchy of Borel functions on a metric space:
  see, e.g., \cite{Kechris:1995}, Section~24.
  % Since the Baire hierarchy is non-degenerate,
  % our hierarchy $\mathcal{C}^{\alpha}$ is also non-degenerate,
  % in the sense that $\mathcal{C}^{\alpha}\ne\mathcal{C}^{\beta}$ when $\alpha\ne\beta$.
\end{remark}

\begin{remark}
  Our definition of upper probability is similar
  to the one given by Perkowski and Pr\"omel \cite{Perkowski/Promel:2016}
  (who modified the definition given in \cite{GTP28}).
  The main differences are that Perkowski and Pr\"omel
  define the upper probability~\eqref{eq:upper-probability}
  using the test supermartingales in the class $\mathcal{C}^1$ rather than $\mathcal{C}$
  (in the notation of Remark~\ref{rem:supermartingales})
  and that they consider a finite horizon
  (our time interval is $[0,\infty)$ instead of their $[0,T]$).
  The proofs of our results given below work for any $\mathcal{C}^n$, $n\ge2$,
  in place of $\mathcal{C}$.
\end{remark}

\begin{remark}\label{rem:measure-theoretic}
  The motivation for our terminology is the analogy with measure-theoretic probability.
  Namely, let us suppose that $I$ and $S$ are local martingales
  on a measure-theoretic probability space.
  Each simple capital process is a local martingale.
  Since each nonnegative local martingale is a supermartingale
  (\cite{Revuz/Yor:1999}, p.~123),
  nonnegative simple capital processes are supermartingales.
  By Fatou's lemma, $\liminf_k X^k$ is a supermartingale
  whenever $X^k$ are nonnegative supermartingales:
  \[
    \Expect
    \left(
      \liminf_k X^k_t\st\FFF_s
    \right)
    \le
    \liminf_k
    \Expect(X^k_t\st\FFF_s)
    \le
    \liminf_k X^k_s,
  \]
  where $0\le s<t$.
  Therefore, our definition gives a subset of the set of all nonnegative measure-theoretic supermartingales.
\end{remark}

\begin{remark}
  Let us check that, in the measure-theoretic setting of Remark~\ref{rem:measure-theoretic}
  (where $I$ and $S$ are local martingales),
  $\UpProb(E)\ge\Prob(E)$ for each $\FFF$-measurable $E$.
  (In this sense our definition \eqref{eq:upper-probability} of $\UpProb$
  is not too permissive, unlike the definition ignoring measurability in \cite{GTP43}.)
  It suffices to establish the ``maximal inequality''
  for nonnegative measure-theoretic supermartingales $X$ with $X_0$ a constant
  in the form
  \begin{equation*}
    \Prob
    \left(
      \liminf_{t\to\infty}X_t\ge1
    \right)
    \le
    X_0.
  \end{equation*}
  To check this, notice that, for each $\epsilon\in(0,1)$,
  \begin{align*}
    \Prob
    \left(
      \liminf_{t\to\infty}X_t\ge1
    \right)
    &\le
    \Prob(X_t\ge1-\epsilon \text{ from some $t$ on})\\
    &\le
    \liminf_{T\to\infty}
    \Prob(X_t\ge1-\epsilon \text{ for all $t\ge T$})\\
    &\le
    \liminf_{T\to\infty}
    \Prob(X_T\ge1-\epsilon)
    \le
    \frac{X_0}{1-\epsilon}.
  \end{align*}
\end{remark}

\begin{remark}\label{rem:I-supermartingales}
  Let us say that $X$ is a class $\alpha$ test $I$-supermartingale,
  where $\alpha$ is a countable ordinal,
  if $XI$ is a class $\alpha$ test supermartingale, as defined in Remark~\ref{rem:supermartingales}.
  For each countable ordinal $\alpha$ and each class $\alpha$ test $I$-supermartingale
  we fix a sequence $X^k$ of test $I$-supermartingales of smaller classes
  such that $X=\liminf_{k\to\infty}X^k$
  (as usual, we are using the axiom of choice freely).
\end{remark}

The following lemma says that the definition~\eqref{eq:upper-probability-II}
is robust in that the $\liminf$ in it can be replaced by $\limsup$ or even $\sup$.

\begin{lemma}\label{lem:robustness}
  For any $E\subseteq\Omega$,
  \begin{equation}\label{eq:upper-probability-III}
    \UpProbI(E)
    =
    \inf
    \bigl\{
      X_0
      \bigm|
      \forall\omega\in\Omega:
      \sup_{t\in[0,\infty)}
      X_t(\omega)
      \ge
      \III_E(\omega)
    \bigr\},
  \end{equation}
  $X$ ranging over the test $I$-supermartingales.
\end{lemma}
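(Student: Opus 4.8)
The plan is to compare $\UpProbI(E)$ as written in \eqref{eq:upper-probability-II} with the quantity $B$ on the right-hand side of \eqref{eq:upper-probability-III}, both infima being taken over test $I$-supermartingales. One inequality is immediate: since $\sup_{t\in[0,\infty)} X_t \ge \liminf_{t\to\infty} X_t$ for every process $X$, any $X$ admissible in \eqref{eq:upper-probability-II} is also admissible in \eqref{eq:upper-probability-III}, so $B \le \UpProbI(E)$. For the reverse inequality $\UpProbI(E) \le B$ I would fix $\epsilon\in(0,1)$ and a test $I$-supermartingale $X$ with $\sup_t X_t(\omega)\ge\III_E(\omega)$ for all $\omega$, and manufacture from it a test $I$-supermartingale $Y$ with $Y_0=X_0$ and $\liminf_{t\to\infty}Y_t\ge(1-\epsilon)\III_E$. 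The idea is to ``lock in'' gains: because $I$ is the num\'eraire, freezing the value of an $I$-supermartingale after a stopping time $\sigma$ just means switching to buy-and-hold of $I$ (holding $X_\sigma$ units of $I$ thereafter), which in $I$-units is the stopped process $t\mapsto X_{\sigma\wedge t}$. Once such a $Y$ is built, $Y/(1-\epsilon)$ is again a test $I$-supermartingale (the class is closed under multiplication by positive constants, as both scaling of simple capital processes and $\liminf$ commute with such multiplication), it is admissible in \eqref{eq:upper-probability-II}, and its initial value is $X_0/(1-\epsilon)$; hence $\UpProbI(E)\le X_0/(1-\epsilon)$. Taking the infimum over $X$ and letting $\epsilon\downarrow0$ then gives $\UpProbI(E)\le B$.

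The construction of $Y$ is the heart of the matter, and I would carry it out by transfinite induction on the class of $X$ (Remarks~\ref{rem:supermartingales} and~\ref{rem:I-supermartingales}). For a level $a>0$ I would define a freezing operator $X\mapsto Y[X,a]$ and prove, for every countable ordinal $\alpha$ and every test $I$-supermartingale $X$ of class $\alpha$, the following strengthened statement: writing $Y:=Y[X,a]$, the process $Y$ is a test $I$-supermartingale with $Y_0=X_0$, $Y\ge0$, and for every $\omega$ and every $s\ge0$,
\[
  \Bigl(\sup_{t\le s} X_t(\omega) > a\Bigr)
  \implies
  \Bigl(Y_t(\omega) \ge a \text{ for all } t \ge s\Bigr).
\]
At class $0$, $X$ is a nonnegative simple capital process and hence continuous in $t$; I would take $\sigma:=\inf\{t:X_t\ge a\}$ and let $Y[X,a]$ be $X$ stopped at $\sigma$, which is a nonnegative simple capital $I$-supermartingale. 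Continuity gives $X_\sigma\ge a$ whenever $\sigma<\infty$, and $\sup_{t\le s}X_t>a$ forces $\sigma\le s$, so $Y_t=X_\sigma\ge a$ for all $t\ge s$. For $\alpha>0$ I would write $X=\liminf_k X^k$ with the $X^k$ of smaller class (using the fixed sequences of Remark~\ref{rem:I-supermartingales}) and set $Y[X,a]:=\liminf_k Y^k$, where $Y^k:=Y[X^k,a]$; this is a test $I$-supermartingale with initial value $\liminf_k X^k_0=X_0$.

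The step I expect to be the main obstacle is propagating the displayed uniform bound through the $\liminf$, that is, exchanging the two limits (over $k$ and over $t$). This is exactly where the strengthened formulation pays off: if $\sup_{t\le s}X_t(\omega)>a$, then $\liminf_k X^k_{t'}(\omega)>a$ for some $t'\le s$, so $\sup_{t\le s}X^k_t(\omega)>a$ for all $k$ beyond some $K$; the inductive hypothesis then gives $Y^k_t(\omega)\ge a$ for all $t\ge s$ and all $k\ge K$, whence $Y_t(\omega)=\liminf_k Y^k_t(\omega)\ge a$ for all $t\ge s$. Applying this with $a=1-\epsilon$ to our $X$, for which $\sup_t X_t\ge1>1-\epsilon$ on $E$, produces the desired $Y=Y[X,1-\epsilon]$ with $\liminf_{t\to\infty}Y_t\ge(1-\epsilon)\III_E$, completing the argument. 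The only routine points left to verify are that stopping sends nonnegative simple capital $I$-supermartingales to nonnegative simple capital $I$-supermartingales with unchanged initial value (in cash terms, follow the original strategy until $\sigma$ and then hold $X_\sigma$ units of $I$), and that the hitting time $\sigma$ of a continuous adapted process is a stopping time in the sense of this paper.
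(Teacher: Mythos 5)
Your proposal is correct and takes essentially the same approach as the paper: the paper's proof also proceeds by transfinite induction over the classes of Remark~\ref{rem:I-supermartingales}, freezing the process once it crosses a threshold (for simple capital processes by stopping/capping, for higher classes by applying the operation to the fixed approximating sequence and taking $\liminf$), and it handles the exchange of limits exactly as you do, via a strict inequality at a finite time propagated through $\liminf_k$. The only difference is cosmetic: the paper first replaces ``$\ge\III_E$'' by ``$>\III_E$'' and caps at level $1$ (its $X^*$ is the constant $1$ after the crossing), whereas you stop at level $1-\epsilon$ and rescale by $1/(1-\epsilon)$ at the end.
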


\begin{proof}
  The only nontrivial part of the equality in~\eqref{eq:upper-probability-III} is the inequality~``$\le$'',
  and it is clear that we can replace ``${}\ge\III_E(\omega)$'' by ``${}>\III_E(\omega)$''.
  This is what we will be proving.

  For each test $I$-supermartingale $X$ with $X_0<1$
  we will define another test $I$-supermartingale $X^*$, satisfying $X^*_0=X_0$,
  as follows.
  If $X$ is a simple capital process, set
  \[
    X^*_t
    :=
    \begin{cases}
      X_t & \text{if $\sup_{s\in[0,t]}X_s<1$}\\
      1 & \text{otherwise}.
    \end{cases}
  \]
  If $X$ is a class $\alpha$ test $I$-supermartingale, we set
  $X^*=\liminf_{k\to\infty}(X^k)^*$,
  where $X^k$ is the fixed sequence of $I$-supermartingales of classes smaller than that of $X$
  (see Remark~\ref{rem:I-supermartingales}).

  It suffices to check that $\liminf_{t\to\infty}X^*_t=1$ whenever $\sup_t X_t>1$.
  We will prove that $X^*_t=1$ whenever $\sup_{s\le t}X_s>1$.
  Fix a $t$.
  The proof is by transfinite induction.
  For nonnegative simple capital processes this is true by definition.
  Now let $X$ be a class $\alpha$ test $I$-supermartingale such that $\sup_{s\le t}X_s>1$.
  Fix $s\le t$ such that $X_s>1$.
  Then $X_s=\liminf_{k\to\infty}X^k_s$ for the fixed $X^k$ of smaller classes,
  and we have $X^k_s>1$ from some $k$ on.
  By the inductive assumption,
  $(X^k)^*_t=1$ from some $k$ on, which implies $X^*_t=1$.
\end{proof}

We call a subset of $[0,\infty)\times\Omega$ a \emph{time-dependent property} of the prices $I$ and $S$.
We say that a time-dependent property $E$ holds \emph{quasi-always} (q.a.)\
if there exists a test supermartingale 
(or equivalently, a test $I$-supermartingale) $X$ such that $X_0=1$ and,
for all $t\in[0,\infty)$ and $\omega\in\Omega$,
\begin{equation}\label{eq:rich}
  \left(
    \exists s<t:(s,\omega)\notin E
  \right)
  \Longrightarrow
  X_t(\omega) = \infty.
\end{equation}
To put it differently, the trader can become infinitely rich as soon as such $E$ is violated.
Many of the results in this paper involve showing that some time-dependent property
(such as the existence of relative quadratic variation or growth rate up to time $t$)
holds quasi-always and therefore can be expected to hold always under the EIH.

\begin{remark}
  In previous work on the topics of this paper (see for example \cite{GTP28}), 
  we used only the very weak form of the EIH that states
  an event will not happen if it allows a trader to become
  infinitely rich infinitely quick.
  This weak hypothesis follows from the efficiency hypothesis for cash
  (which implies that an event with $\UpProb$-probability zero will not happen)
  just as easily as from the EIH for $I$
  (which implies that an event with $\UpProbI$-probability zero will not happen).
  Indeed, if we set
  \[
    \mathrm{Fail}_E:= \{\omega\in\Omega \st (s,\omega)\not\in E \text{ for some } s\in[0,\infty)\}
  \]
  when $E$ is a time-dependent property,
  then $E$ holding quasi-always implies $\UpProb(\mathrm{Fail}_E)=\UpProbI(\mathrm{Fail}_E)=0$;
  see \eqref{eq:rich}.
  For this reason,
  we did not introduce $\UpProbI$ in this previous work.
  Instead we discussed our results
  in terms of $\UpProb$, which is easier to define.
\end{remark}

\section{Existence of some basic quantities (1)}
\label{sec:existence-1}

In this section we do the preparatory work needed to state our results about the equity premium;
namely, we show the existence of all the quantities required in their statements.

All quantities will be defined in terms of the sequences of stopping times $T^n_0:=0$ and
\begin{equation}\label{eq:T-1}
  T^n_k(\omega)
  :=
  \inf
  \Bigl\{
    t>T^n_{k-1}(\omega)
    \st
    \left|I(t)-I(T^n_{k-1})\right| = 2^{-n}
  \Bigr\}
\end{equation}
for $k=1,2,\ldots$;
here $n$ is a positive integer, $n\in\{1,2,\ldots\}$.
The quadratic variation of $I$ on the log scale
(or \emph{relative quadratic variation} of $I$)
can be measured by the sums of squares of the relative increments of $I(t)$,
\begin{equation}\label{eq:Sigma}
  \Sigma^{I,n}_t(\omega)
  :=
  \sum_{k=1}^{\infty}
  \left(
    \frac
    {
      I(T^n_k\wedge t)
      -
      I(T^n_{k-1}\wedge t)
    }
    {
      I(T^n_{k-1}\wedge t)
    }
  \right)^2,
  \quad
  n=1,2,\ldots\,.
\end{equation}
It follows from Theorem~3.1 in \cite{GTP28} and the properties of measure-theoretic Brownian motion
that the limit of $\Sigma^{I,n}$ as $n\to\infty$ exists quasi-always;
however, we will also check this independently in Section~\ref{sec:existence-2}.
The limit will be denoted $\Sigma^I_t(\omega)$.
Moreover, the convergence is uniform on compact intervals,
so the limit is continuous quasi-always.
(Formally, the property ``$\Sigma^{I,n}_s\to\Sigma^I_s$ as $n\to\infty$ uniformly over $s\in[0,t]$''
of $t$ and $\omega$ holds quasi-always.)

\begin{remark}
  Of course, we could have defined $\Sigma^I_t$ as the limit of
  \begin{equation*}
    \sum_{k=1}^{\infty}
    \left(
      \ln I(T^n_k\wedge t)
      -
      \ln I(T^n_{k-1}\wedge t)
    \right)^2
  \end{equation*}
  as $n\to\infty$.
\end{remark}

The quantity $\Sigma^I_t$ measures the accumulated volatility of the index by time $t$.
It can be interpreted as the intrinsic time that elapsed by the moment $t$ of physical time;
unlike physical time, intrinsic time flows faster during intensive trading.

We will simplify our exposition by requiring
that Reality ensure that the function $\Sigma^I$ exists and $\Sigma^I_{\infty}=\infty$.
(Essentially, that the market exists forever and trading in it never dies out.)

The cumulative relative growth of the index $I$ by time $t$ is
\begin{equation}\label{eq:Mu}
  \Mu^{I,n}_t(\omega)
  :=
  \sum_{k=1}^{\infty}
  \frac
  {
    I(T^n_k\wedge t)
    -
    I(T^n_{k-1}\wedge t)
  }
  {
    I(T^n_{k-1}\wedge t)
  },
  \quad
  n=1,2,\ldots
\end{equation}
(where $\Mu$ is the capital version of the Greek letter $\mu$).
The existence of the limit of $\Mu^{I,n}$ as $n\to\infty$ and a simple expression for it
are provided by the following lemma.

\begin{lemma}\label{lem:Mu}
  The limit $\Mu^I:=\lim_{n\to\infty}\Mu^{I,n}$ exists and satisfies, quasi-always,
  \begin{equation*}
    \Mu^I_t
    =
    \ln I(t)
    +
    \frac12
    \Sigma^I_t.
  \end{equation*}
\end{lemma}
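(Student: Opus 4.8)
The plan is to run a second-order Taylor expansion of the logarithm, exploiting the fact that by construction the relative increments of $I$ are uniformly small. First I would abbreviate
\[
  r^n_k
  :=
  \frac{I(T^n_k\wedge t)-I(T^n_{k-1}\wedge t)}{I(T^n_{k-1}\wedge t)},
\]
so that $\Mu^{I,n}_t=\sum_k r^n_k$ and $\Sigma^{I,n}_t=\sum_k (r^n_k)^2$ by~\eqref{eq:Mu} and~\eqref{eq:Sigma}. Since $T^n_0=0$ and $I(0)=1$, the logarithmic increments telescope,
\[
  \ln I(t)
  =
  \sum_{k=1}^\infty
  \bigl(\ln I(T^n_k\wedge t)-\ln I(T^n_{k-1}\wedge t)\bigr)
  =
  \sum_{k=1}^\infty\ln\bigl(1+r^n_k\bigr),
\]
where only finitely many terms are nonzero for each $t$. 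Expanding $\ln(1+x)=x-\tfrac12 x^2+O(x^3)$ termwise and rearranging would give
\[
  \Mu^{I,n}_t
  =
  \ln I(t)+\tfrac12\Sigma^{I,n}_t-R_n,
  \qquad
  R_n:=\sum_{k=1}^\infty\Bigl(\ln(1+r^n_k)-r^n_k+\tfrac12(r^n_k)^2\Bigr),
\]
reducing everything to the claim that $R_n\to0$.

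The key estimate I would establish is the uniform smallness of the relative increments. By continuity of $I$, whenever the $k$-th term above is nonzero one has $T^n_{k-1}\in[0,t]$ and $\lvert I(T^n_k\wedge t)-I(T^n_{k-1})\rvert\le 2^{-n}$ (with equality unless $T^n_k\wedge t=t$), while $I(T^n_{k-1})\ge\delta_t:=\min_{s\in[0,t]}I(s)>0$ because $I$ is positive and continuous on the compact interval $[0,t]$. Hence $\max_k\lvert r^n_k\rvert\le 2^{-n}/\delta_t\to0$. Since $\lvert\ln(1+x)-x+\tfrac12 x^2\rvert\le 2\lvert x\rvert^3$ for $\lvert x\rvert\le\tfrac12$, once $2^{-n}/\delta_t\le\tfrac12$ I would bound
\[
  \lvert R_n\rvert
  \le
  2\sum_{k=1}^\infty\lvert r^n_k\rvert^3
  \le
  2\Bigl(\max_k\lvert r^n_k\rvert\Bigr)\sum_{k=1}^\infty(r^n_k)^2
  =
  2\Bigl(\max_k\lvert r^n_k\rvert\Bigr)\Sigma^{I,n}_t.
\]

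Finally I would invoke the quasi-always existence of $\Sigma^I$ recorded before Lemma~\ref{lem:Mu}. On the event, holding quasi-always, on which $\Sigma^{I,n}_s\to\Sigma^I_s$ uniformly for $s\in[0,t]$ with $\Sigma^I_t<\infty$, the factors $\Sigma^{I,n}_t$ are bounded, so the displayed bound forces $R_n\to0$; letting $n\to\infty$ in the identity for $\Mu^{I,n}_t$ then shows that $\Mu^I_t:=\lim_n\Mu^{I,n}_t$ exists and equals $\ln I(t)+\tfrac12\Sigma^I_t$. The hard part will be the quasi-always bookkeeping rather than the analysis: I expect the only real subtlety to be arranging that the error vanishes simultaneously for all $s\le t$ (so that the convergence is a genuine time-dependent property in the sense of~\eqref{eq:rich}), and thereby inheriting the witnessing test supermartingale directly from the already-established result for $\Sigma^I$. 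The analytic core is just the elementary cubic tail bound above, which is uniform in $s\in[0,t]$ precisely because the absolute increments are pinned at $2^{-n}$.
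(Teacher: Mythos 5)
Your proposal is correct and takes essentially the same route as the paper's proof: telescoping $\ln I(t)$ over the partition, Taylor-expanding $\ln(1+r^n_k)$ to second order, and showing the cubic remainder vanishes. You merely make explicit what the paper compresses into ``the $O(\cdots)$ term is $o(1)$ since the denominator can be ignored'' --- namely the bound $\max_k\lvert r^n_k\rvert\le 2^{-n}/\min_{[0,t]}I$ from positivity and continuity of $I$, together with the quasi-always boundedness of $\Sigma^{I,n}_t$ --- which is a worthwhile filling-in of detail but not a different argument.
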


\begin{proof} % [Proof of Lemma~\ref{lem:Mu}]
  Let us show that the limit exists and is uniform over compact time intervals quasi-always.
  Applying
  \[
    \ln(1+m_k)
    =
    m_k
    -
    \frac12 
    m_k^2
    +
    O\left(\left|m_k\right|^3\right)
  \]
  to
  \begin{equation}\label{eq:m}
    m_k
    :=
    \frac
    {
      I(T^n_k\wedge t)
      -
      I(T^n_{k-1}\wedge t)
    }
    {
      I(T^n_{k-1}\wedge t)
    },
  \end{equation}
  we obtain
  \begin{multline*}
    \ln I(t)
    =
    \sum_{k=1}^{\infty}
    \ln\frac{I(T^n_k\wedge t)}{I(T^n_{k-1}\wedge t)}\\
    =
    \Mu^{I,n}_t
    -
    \frac12
    \Sigma^{I,n}_t
    +
    O
    \left(
      \sum_{k=1}^{\infty}
      \left|
        \frac
        {
          I(T^n_k\wedge t)
          -
          I(T^n_{k-1}\wedge t)
        }
        {
          I(T^n_{k-1}\wedge t)
        }
      \right|^3
    \right),
  \end{multline*}
  and it remains to notice that the last added, $O(\cdots)$,
  is $o(1)$ since the denominator in it can be ignored
  (remember that $I$ is positive).
\end{proof}

\section{Equity premium (1): reduction to the Black--Scholes model}
\label{sec:EP}

In this section we will state two forms of our equity premium result:
as a central limit theorem (which is trivial in the context of Brownian motion)
and as a law of the iterated logarithm.
Remember that we assume that $I(0)=1$.

\begin{lemma}\label{lem:BM}
  Set $\tau_t:=\inf\{s\st \Sigma^I_s\ge t\}$ for each $t\in[0,\infty)$.
  As function of $t$,
  $\ln I(\tau_t) + t/2$ is Brownian motion with respect to $\UpProb$.
\end{lemma}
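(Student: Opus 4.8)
The plan is to reduce the statement, via Lemma~\ref{lem:Mu}, to the probability-free Dubins--Schwarz theorem of \cite{GTP28}. Here ``Brownian motion with respect to $\UpProb$'' means that $\UpProb$ assigns to path-events concerning $t\mapsto\ln I(\tau_t)+t/2$ the values the Wiener measure would; so the goal is to exhibit this path as a game-theoretic martingale time-changed by the inverse of its own quadratic variation, and then quote \cite{GTP28}.

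First I would rewrite the process. Since $\Sigma^I$ is continuous and $\Sigma^I_\infty=\infty$ quasi-always, the time change $\tau_t=\inf\{s\st\Sigma^I_s\ge t\}$ satisfies $\Sigma^I_{\tau_t}=t$. Evaluating the identity $\ln I(s)=\Mu^I_s-\frac12\Sigma^I_s$ of Lemma~\ref{lem:Mu} at $s=\tau_t$ gives, quasi-always,
\begin{equation*}
  \ln I(\tau_t)+\frac t2=\Mu^I_{\tau_t}.
\end{equation*}
Thus it suffices to show that $t\mapsto\Mu^I_{\tau_t}$ is Brownian motion with respect to $\UpProb$.

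Next I would identify $\Mu^I$ as a continuous game-theoretic martingale whose relative quadratic variation is $\Sigma^I$. Each $\Mu^{I,n}$ from \eqref{eq:Mu} is the simple capital process (with zero initial capital) of the strategy holding $1/I(T^n_{k-1})$ units of $I$ on $[T^n_{k-1},T^n_k]$, so that its increment over that interval is the quantity $m_k$ of \eqref{eq:m}; by \eqref{eq:Sigma} the sum of the squares of these increments is $\Sigma^{I,n}_t$, which converges to $\Sigma^I_t$ uniformly on compacts quasi-always. Hence $\Mu^I=\lim_n\Mu^{I,n}$ is a limit of simple capital processes whose quadratic variation is $\Sigma^I$, and $\tau_t$ is precisely the Dubins--Schwarz time change for $\Mu^I$. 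Applying the probability-free Dubins--Schwarz theorem (Theorem~3.1 of \cite{GTP28}) to $\Mu^I$ then gives that $\Mu^I_{\tau_t}$ is Brownian motion with respect to $\UpProb$, which is the desired conclusion. Equivalently, this is the assertion that after the time change $I$ follows the Black--Scholes model $I(\tau_t)=\exp(W_t-t/2)$ with $W$ a $\UpProb$-Brownian motion.

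The main obstacle is the middle step: matching $\Mu^I$ to the hypotheses of \cite{GTP28}. One must verify that the quadratic variation of $\Mu^I$ is $\Sigma^I$ (the \emph{relative} quadratic variation, not the ordinary quadratic variation of $I$) and that $\Mu^I$, being a limit of simple capital processes rather than the price path itself, is an admissible object for the cited theorem. The bound $|m_k|=2^{-n}/I(T^n_{k-1})$ forced by the definition \eqref{eq:T-1} of $T^n_k$ controls the higher-order terms exactly as in the proof of Lemma~\ref{lem:Mu}, so that the partition sums $\sum_k m_k^2$ along the $T^n_k$ indeed converge to $\Sigma^I$; this is what licenses using $\tau_t$ as the time change. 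A minor point worth checking is that the conclusion is stated for $\UpProb$ (cash numéraire), in agreement with \cite{GTP28}, and not for $\UpProbI$.
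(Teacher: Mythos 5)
Your reduction via Lemma~\ref{lem:Mu} to the statement that $t\mapsto\Mu^I_{\tau_t}$ is Brownian motion is sound (quasi-always identities can be substituted inside $\UpProb$, since the exceptional event has upper probability zero), and your identification of the quadratic variation of $\Mu^I$ with $\Sigma^I$ is correct. The gap is the step you flag yourself and then leave unresolved: applying Theorem~3.1 of \cite{GTP28} to $\Mu^I$. That theorem is not a statement about arbitrary continuous game-theoretic martingales; it asserts $\UpProb(E)=\Wiener(E)$ only for time-superinvariant events $E$ concerning the \emph{traded price path itself}, the upper probability being defined through trading in that path. But $\Mu^I$ is not a traded security; it is a derived capital process (a $\liminf$ of simple capital processes), so the cited theorem does not apply to it as stated. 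Making it apply would require either proving that trading in ``$\Mu^I$'' and trading in $I$ generate the same test supermartingales (in both directions, through the $\liminf$-closure, and on the right sample space---note that the theorem in \cite{GTP28} quantifies over all continuous paths, not just those arising as $\Mu^I$ for some positive $I$), or recasting the event of interest as a time-superinvariant event about $I$. The bound $\left|m_k\right|\le 2^{-n}/I(T^n_{k-1})$ that you invoke only controls the convergence $\Sigma^{I,n}\to\Sigma^I$; it says nothing about the admissibility of $\Mu^I$ for the cited theorem, which is the actual crux.

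The paper's proof resolves exactly this difficulty by never treating $\Mu^I$ as the object to which the game-theoretic theorem is applied. It fixes a Borel set $E\subseteq C[0,\infty)$, observes that $\phi^{-1}(E)$ (where $\phi(\omega)$ is the path $t\mapsto\ln I(\tau_t)+t/2$) is a time-superinvariant event about the traded path $I$, and applies Theorem~3.1 of \cite{GTP28} to conclude $\UpProb(\phi^{-1}(E))=\Wiener(\phi^{-1}(E))$. The remaining work---passing from absolute to relative quadratic variation and taking logarithms---is then carried out entirely on the measure-theoretic side, where the classical Dubins--Schwarz theorem \emph{is} available for any continuous local martingale: time-changing $\Wiener$-typical paths by their relative quadratic variation produces the law of $G_t=e^{W_t-t/2}$, whence $\Wiener(\phi^{-1}(E))=\Wiener(E)$. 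In short, your instinct to use Lemma~\ref{lem:Mu} and the time change $\tau$ is the right one, but the log/relative-quadratic-variation translation must be done under the Wiener measure, where classical tools apply, rather than inside the game-theoretic upper probability, precisely because the probability-free Dubins--Schwarz theorem you cite is proved only for the price path.
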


\begin{remark}
  Formally, Lemma~\ref{lem:BM} says that the pushforward of the upper probability $\UpProb$
  is the standard Wiener measure $\Wiener$ on $C[0,\infty)$ under the following mapping $\phi:\Omega\to C[0,\infty)$:
  $\omega=(I,S)\in\Omega$ is mapped to the path $t\in[0,\infty)\mapsto \ln I(\tau_t) + t/2$.
  The domain of $\phi$ is the set of $(I,S)$ such that $\Sigma^I_{\infty}=\infty$,
  which was our requirement for Reality in Section~\ref{sec:existence-1}.
  % remember that the existence of $\tau_t$ for all $t\in[0,\infty)$
  % is part of this requirement.
\end{remark}

\begin{proof}[Proof of Lemma~\ref{lem:BM}]
  Let $E$ be a Borel set in $C[0,\infty)$.
  The set $\phi^{-1}(E)$ is time-superinvariant
  (as defined in \cite{GTP28}, Section~3).
  According to Theorem~3.1 in \cite{GTP28},
  $\UpProb(\phi^{-1}(E))$ coincides with the standard Wiener measure $\Wiener(\phi^{-1}(E))$ of $\phi^{-1}(E)$.
  This measure $\Wiener$ is concentrated on the positive functions $f$
  whose quadratic variation is the identity.
  Applying the time transformation $f'(t'):=f(t)$, where $t':=\int_0^t f^{-2}(s)\dd s$ to those functions,
  we obtain a probability measure $P$ (the pushforward of $\Wiener$ under $f\mapsto f'$)
  concentrated on the functions whose relative quadratic variation $\Sigma$ is the identity;
  we know that $\Wiener(\phi^{-1}(E))=P(\phi^{-1}(E))$.
  By the standard measure-theoretic Dubins--Schwarz theorem,
  $P$ will coincide with the distribution of the measure-theoretic martingale
  \begin{equation}\label{eq:GBM}
    G_t
    :=
    e^{W_t-t/2},
  \end{equation}
  $W$ being the standard Brownian motion (started at $0$).
  (Notice that \eqref{eq:GBM} is a special case of geometric Brownian motion, i.e., the Black--Scholes model.)
  Therefore,
  \[
    \UpProb(\phi^{-1}(E))
    =
    \Wiener(\phi^{-1}(E))
    =
    P(\phi^{-1}(E))
    =
    \Wiener(E).
    \qedhere
  \]
\end{proof}

\begin{remark}\label{rem:BM}
  Lemma~\ref{lem:BM} can be strengthened to say that,
  for any nonnegative time-superinvariant Borel functional $F:\Omega\to[0,\infty)$,
  $
    \UpExpect(F\circ\phi)
    =
    \int F\dd\Wiener
  $.
  Moreover, the same argument as in the proof of Lemma~\ref{lem:BM}
  (but using Theorem~6.3 instead of Theorem~3.1 in \cite{GTP28})
  shows that the last line of the proof can be replaced by
  \[
    \UpExpect(F\circ\phi)
    =
    \int (F\circ\phi) \dd\Wiener
    =
    \int (F\circ\phi) \dd P
    =
    \int F \dd\Wiener.
  \]
\end{remark}

\begin{corollary}\label{cor:BM}
  As function of $t$,
  $\ln I(\tau_t) - t/2$ is Brownian motion with respect to $\UpProbI$.
\end{corollary}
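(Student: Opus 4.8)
The plan is to deduce the corollary from Lemma~\ref{lem:BM} by a change of num\'eraire, which in the Black--Scholes picture of that lemma is exactly a Girsanov tilt. Recall that the formal content (as in the remark following Lemma~\ref{lem:BM}) is that the pushforward of $\UpProbI$ under $\psi:\omega\mapsto(t\mapsto\ln I(\tau_t)-t/2)$ equals the Wiener measure $\Wiener$; that is, $\UpProbI(\psi^{-1}(E))=\Wiener(E)$ for every Borel $E\subseteq C[0,\infty)$. Writing $\phi$ for the map of Lemma~\ref{lem:BM} and $\theta$ for the deterministic shift $(\theta f)(t):=f(t)-t$, we have $\psi=\theta\circ\phi$, hence $\psi^{-1}(E)=\phi^{-1}(\theta^{-1}(E))$; since $\theta^{-1}(E)$ is Borel, this set is time-superinvariant exactly as in the proof of Lemma~\ref{lem:BM}. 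Applying that lemma directly would give the $\UpProb$-value $\Wiener(\theta^{-1}(E))$, i.e.\ the law of a drift-$(-1)$ Brownian motion, so the whole point is that passing to $\UpProbI$ corrects this drift.

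Next, recall from the proof of Lemma~\ref{lem:BM} that in intrinsic time the index is the exponential martingale $I(\tau_t)=e^{W_t-t/2}=:G_t$, where $W:=\phi(\omega)$ is $\UpProb$-Brownian. Passing from $\UpProb$ to $\UpProbI$ replaces each test supermartingale $Y$ by the test $I$-supermartingale $Y/I$; this is precisely the change of num\'eraire from cash to $I$, whose density is $I(\tau_t)=G_t$. I would make this quantitative in both directions. For the bound $\UpProbI(\psi^{-1}(E))\le\Wiener(E)$, take the $I$-num\'eraire (Girsanov-tilted) martingale $M_t:=\Expect^{I}(\III_{\psi^{-1}(E)}\mid\FFF_t)$ and multiply back by the num\'eraire to obtain a test supermartingale $Y$ with $Y_0=\Wiener(E)$ and $\liminf_t Y_t/I_t\ge\III_{\psi^{-1}(E)}$; the reverse inequality is the matching maximal-inequality half, supplied by the same \cite{GTP28} machinery applied with $I$ as num\'eraire. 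By the Cameron--Martin--Girsanov theorem the tilt by $G_t=e^{W_t-t/2}$ (kernel $\equiv 1$) turns $W_t-t$ into a standard Brownian motion; since $\psi(\omega)(t)=\ln I(\tau_t)-t/2=(W_t-t/2)-t/2=W_t-t$, this is exactly the assertion that $\psi(\omega)$ is $\UpProbI$-Brownian. Equivalently, the net effect of the num\'eraire change is to shift the drift of $\ln I(\tau_t)$ from $-t/2$ (under $\UpProb$, giving $\phi$) to $+t/2$ (under $\UpProbI$, giving $\psi$).

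The main obstacle is justifying this Girsanov step within the probability-free, infinite-horizon framework rather than on a fixed probability space. The difficulty is twofold: the certifying object must be a genuine test $I$-supermartingale, so $\liminf_t Y_t/I_t\ge\III_{\psi^{-1}(E)}$ has to hold for \emph{every} $\omega$, not merely almost surely; and, because $G_t\to0$, the tilted and untilted measures are mutually singular on the tail, so the identity cannot be transported directly to $t=\infty$. I would handle both exactly as Lemma~\ref{lem:BM} does: carry out the comparison at finite intrinsic times, on cylinder events $E$, where the tilt by $G_t$ is a bona fide change of measure; invoke an $I$-num\'eraire analogue of Theorem~3.1 of \cite{GTP28} (equivalently, realize the tilt as the replacement $Y\mapsto Y/I$ at the level of test supermartingales and reuse the time-superinvariance already established); and then extend from cylinder sets to all Borel $E$ using that a Brownian law is determined by its finite-dimensional distributions. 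The one genuinely new ingredient beyond Lemma~\ref{lem:BM} is checking that this num\'eraire change is compatible with the test-$I$-supermartingale formalism and with the notion of time-superinvariance used there.
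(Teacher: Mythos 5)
Your high-level plan---finite-horizon reduction, a classical Girsanov tilt by the density $Z_T=I(\tau_T)=e^{W_T-T/2}$, and the reading of $\UpProb\to\UpProbI$ as a change of num\'eraire---is exactly the paper's route, and you correctly identify the two obstacles (the certificate must work for every $\omega$, and the tilted and untilted laws are mutually singular at infinity). The gap is in the step you yourself flag as ``the one genuinely new ingredient,'' and it is precisely where the paper does something different and simpler. Your proposed certificate $M_t:=\Expect^{I}(\III_{\psi^{-1}(E)}\mid\FFF_t)$ is not defined: there is no probability measure on $\Omega$ in this framework, so conditional expectations of $\III_{\psi^{-1}(E)}$ make no sense there; and the ``matching maximal-inequality half,'' which you attribute to an $I$-num\'eraire analogue of Theorem~3.1 of \cite{GTP28}, does not exist in the cited literature---it is essentially the statement being proved. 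The paper never needs such an analogue. Since the event $\tilde\phi^{-1}(E)$ is settled by the finite intrinsic time $\tau_T$, the constraint $X_{\tau_T}/I_{\tau_T}\ge\III_{\tilde\phi^{-1}(E)}$ in the definition \eqref{eq:upper-probability-I} can be rewritten as $X_{\tau_T}\ge I_{\tau_T}\III_{\tilde\phi^{-1}(E)}$, which converts the $I$-upper probability into a \emph{plain} upper expectation: $\UpProbI(\tilde\phi^{-1}(E))=\UpExpect\bigl(I_{\tau_T}\,\III_E\circ\tilde\phi\bigr)$. The right-hand side is the upper expectation of a nonnegative time-superinvariant Borel functional composed with $\phi$, so Remark~\ref{rem:BM} (the expectation form of the probability-free Dubins--Schwarz theorem, i.e., Theorem~6.3 of \cite{GTP28}) gives $\UpExpect\bigl(I_{\tau_T}\,\III_E\circ\tilde\phi\bigr)=\Expect_P\bigl(Z_T\,\III_E\circ\tilde W\bigr)$, and classical Girsanov on Wiener space finishes: $\Expect_P\bigl(Z_T\,\III_E\circ\tilde W\bigr)=\Expect_{\tilde P}\bigl(\III_E\circ\tilde W\bigr)=\Wiener(E)$. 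No new game-theoretic construction and no num\'eraire version of the Dubins--Schwarz theorem are required---only this definitional conversion plus the already-available cash-num\'eraire results.

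A second, subtler flaw: your closing step, extending from cylinder events to all Borel $E$ ``using that a Brownian law is determined by its finite-dimensional distributions,'' is not valid here. The set function $E\mapsto\UpProbI(\psi^{-1}(E))$ is an upper probability, not a measure; it is subadditive but not countably additive, so agreement with $\Wiener$ on a generating $\pi$-system of cylinder sets does not propagate to all Borel sets. This is why the paper runs its chain of equalities for an arbitrary Borel $E\subseteq C[0,T]$ from the outset, rather than arguing by extension from a generating class.
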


\begin{proof}
  According to Lemma~\ref{lem:BM},
  $W_t:=\ln I(\tau_t)+t/2$ is standard Brownian motion w.r.\ to $\UpProb$.
  To change the num\'eraire we apply Girsanov's theorem
  (see, e.g., \cite{Karatzas/Shreve:1991}, Corollary~3.5.2;
  this version, unlike Theorem~3.5.1 in \cite{Karatzas/Shreve:1991},
  does not require the usual conditions).
  It suffices to show, for each $T>0$,
  that $\ln I(\tau_t) - t/2$, $t\in[0,T]$,
  is Brownian motion over $[0,T]$ with respect to $\UpProbI$.
  Fix such a $T$.
  By Girsanov's theorem, $\tilde W_t:=W_t-t=\ln I(\tau_t)-t/2$ is standard Brownian motion
  w.r.\ to the measure $\tilde P$ on $C[0,T]$
  whose density with respect to $P$
  (as defined in the proof of Lemma~\ref{lem:BM} but restricted to $C[0,T]$) is
  \[
    Z_T
    :=
    e^{W_T-T/2}
    =
    I(\tau_T).
  \]
  Let $\tilde\phi:\Omega\to C[0,T]$ map each $\omega=(I,S)\in\Omega$
  to the path $t\in[0,T]\mapsto \ln I(\tau_t) - t/2$;
  we will continue to use the notation $\phi$ for the function that maps each $\omega=(I,S)\in\Omega$
  to the path $t\in[0,T]\mapsto \ln I(\tau_t) + t/2$.
  Let us check that $\tilde P$ is the pushforward of $\UpProbI$ under $\tilde\phi$:
  for each Borel $E\subseteq C[0,T]$,
  by the definition~\eqref{eq:upper-probability-I} and Remark~\ref{rem:BM},
  \begin{align*}
    \UpProbI(\tilde\phi^{-1}(E))
    &=
    \inf
    \bigl\{
      X_0
      \bigm|
      \forall\omega\in\Omega:
      X_{\tau_T}(\omega) / I_{\tau_T}(\omega)
      \ge
      \III_{\tilde\phi^{-1}(E)}(\omega)
    \bigr\}\\
    &=
    \inf
    \bigl\{
      X_0
      \bigm|
      \forall\omega\in\Omega:
      X_{\tau_T}(\omega)
      \ge
      I_{\tau_T}(\omega)
      \III_{\tilde\phi^{-1}(E)}(\omega)
    \bigr\}\\
    &=
    \UpExpect
    \bigl(
      I_{\tau_T}
      \III_{\tilde\phi^{-1}(E)}
    \bigr)
    =
    \UpExpect
    \bigl(
      I_{\tau_T}
      \III_E \circ {\tilde\phi}
    \bigr)\\
    &=
    \Expect_P
    \bigl(
      Z_T
      \III_E \circ {\tilde W}
    \bigr)
    =
    \Expect_{\tilde P}
    \bigl(
      \III_E \circ {\tilde W}
    \bigr)
    =
    \Wiener(E).
    \qedhere
  \end{align*}
\end{proof}

Let us first derive a central limit theorem for the index from Corollary~\ref{cor:BM}.
Let $z_p$ be the upper $p$-quantile of the standard Gaussian distribution $N_{0,1}$;
i.e., $z_p$ is defined by the requirement that $\Prob(\xi\ge z_{p})=p$,
where $\xi\sim N_{0,1}$.

\begin{corollary}\label{cor:2-sided}
  If $\delta>0$ and $T>0$ are positive constants,
  \begin{equation}\label{eq:2-sided}
    \LowProbI
    \left\{
      \left|\ln I(\tau_T) - T/2\right|
      <
      z_{\delta/2}\sqrt{T}
    \right\}
    =
    1-\delta.
  \end{equation}
\end{corollary}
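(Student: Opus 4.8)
The plan is to obtain~\eqref{eq:2-sided} directly from Corollary~\ref{cor:BM}, which tells us that $t\mapsto\ln I(\tau_t)-t/2$ is standard Brownian motion with respect to $\UpProbI$; concretely, the chain of equalities in its proof establishes the exact pushforward identity $\UpProbI(\tilde\phi^{-1}(E))=\Wiener(E)$ for every Borel $E\subseteq C[0,T]$, where $\tilde\phi$ sends $\omega=(I,S)$ to the path $t\in[0,T]\mapsto\ln I(\tau_t)-t/2$ (total on the sample space, since our standing assumption $\Sigma^I_{\infty}=\infty$ guarantees $\tau_T<\infty$). The event in~\eqref{eq:2-sided} depends only on the value of this path at time $T$, so first I would write it as $\tilde\phi^{-1}(E_0)$ with
\[
  E_0 := \bigl\{ f\in C[0,T] \mid |f(T)| < z_{\delta/2}\sqrt{T} \bigr\},
\]
which is open, hence Borel, because the evaluation $f\mapsto f(T)$ is continuous.

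Next I would compute $\Wiener(E_0)$. Under the Wiener measure the coordinate $f(T)$ is Gaussian with mean $0$ and variance $T$, so $f(T)/\sqrt{T}\sim N_{0,1}$ and
\[
  \Wiener(E_0) = \Prob\bigl(|\xi|<z_{\delta/2}\bigr),
  \qquad \xi\sim N_{0,1}.
\]
By the definition of the upper quantile, $\Prob(\xi\ge z_{\delta/2})=\delta/2$, and by the symmetry of $N_{0,1}$ the lower tail contributes the same $\delta/2$; hence the two tails together carry mass $\delta$, giving $\Wiener(E_0)=1-\delta$ and $\Wiener(E_0^c)=\delta$. This Gaussian step is routine and is insensitive to whether the inequality is strict, since the boundary $\{|f(T)|=z_{\delta/2}\sqrt{T}\}$ is $\Wiener$-null.

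The only real point is the passage from the upper probability delivered by Corollary~\ref{cor:BM} to the \emph{lower} probability in the statement. Using $\LowProbI(A)=1-\UpProbI(A^c)$, together with $A^c=\tilde\phi^{-1}(E_0^c)$ and the pushforward identity applied to the Borel set $E_0^c$, I would obtain
\[
  \LowProbI(\tilde\phi^{-1}(E_0))
  = 1-\UpProbI(\tilde\phi^{-1}(E_0^c))
  = 1-\Wiener(E_0^c)
  = 1-\delta,
\]
which is exactly~\eqref{eq:2-sided}. I expect the conceptual crux to be precisely this observation: because the pushforward of $\UpProbI$ under $\tilde\phi$ is the genuine countably additive measure $\Wiener$ rather than merely a superadditive set function, the upper and lower $I$-probabilities of a Borel cylinder event are forced to coincide, and no separate construction of a lower-probability witness (test $I$-supermartingale for the complement) is needed. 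Everything else is bookkeeping.
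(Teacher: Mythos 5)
Your proposal is correct and takes essentially the same route as the paper: the paper's entire proof is the single remark that Corollaries~\ref{cor:2-sided} and~\ref{cor:1-sided} follow immediately from Corollary~\ref{cor:BM}. Your write-up simply makes explicit the bookkeeping the paper leaves implicit---identifying the event as a Borel cylinder set, the Gaussian quantile computation, and converting the exact pushforward identity for $\UpProbI$ into the stated $\LowProbI$ equality.
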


\noindent
We can interpret~\eqref{eq:2-sided} by saying
that there is a prudent trading strategy that beats the index
by a factor of nearly $1/\delta$ unless
$I(\tau_T)$ is close to $e^{T/2}$ in the sense
\begin{equation*}
  I(\tau_T)
  \in
  \left(
    e^{T/2-z_{\delta/2}\sqrt{T}},
    e^{T/2+z_{\delta/2}\sqrt{T}}
  \right).
\end{equation*}
In other words,
the efficient index can be expected to outperform cash $e^{T/2}$-fold.
The case $\delta\ge1$ in Corollary~\ref{cor:2-sided} is trivial,
but we do not exclude it to simplify its statement;
the upper quantile $z_p$ is understood to be $-\infty$ when $p\ge1$.

If we are only interested in a lower or upper bound on $I$,
we can use the following corollary.
\begin{corollary}\label{cor:1-sided}
  Let $\delta$ and $T$ be as before.
  Then
  \begin{equation*}
    \LowProbI
    \left\{
      \ln I(\tau_T) - T/2
      >
      -z_{\delta}\sqrt{T}
    \right\}
    =
    1-\delta
  \end{equation*}
  and
  \begin{equation*}
    \LowProbI
    \left\{
      \ln I(\tau_T) - T/2
      <
      z_{\delta}\sqrt{T}
    \right\}
    =
    1-\delta.
  \end{equation*}
\end{corollary}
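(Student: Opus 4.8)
The plan is to treat Corollary~\ref{cor:1-sided} as the one-sided counterpart of Corollary~\ref{cor:2-sided} and to reduce both displayed equalities directly to Corollary~\ref{cor:BM}. The crucial observation is that each of the two events depends only on the value of the process $t\mapsto\ln I(\tau_t)-t/2$ at the single time $T$. Writing $\tilde\phi(\omega)$ for the path $t\in[0,T]\mapsto\ln I(\tau_t)-t/2$ as in the proof of Corollary~\ref{cor:BM}, the first event is $\tilde\phi^{-1}(E_1)$ with $E_1:=\{f\in C[0,T]\st f(T)>-z_\delta\sqrt T\}$ and the second is $\tilde\phi^{-1}(E_2)$ with $E_2:=\{f\in C[0,T]\st f(T)<z_\delta\sqrt T\}$, both Borel cylinder sets.

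First I would use that the proof of Corollary~\ref{cor:BM} establishes the \emph{exact} identity $\UpProbI(\tilde\phi^{-1}(E))=\Wiener(E)$ for every Borel $E\subseteq C[0,T]$, not merely an inequality. Applying this simultaneously to $E$ and to its complement, and recalling both $\tilde\phi^{-1}(E)^c=\tilde\phi^{-1}(E^c)$ and the relation $\LowProbI(A)=1-\UpProbI(\Omega\setminus A)$, I would obtain $\LowProbI(\tilde\phi^{-1}(E))=1-\Wiener(E^c)=\Wiener(E)$. Thus for these cylinder events both the upper and the lower $I$-probabilities collapse to the Wiener probability of the corresponding set, i.e.\ to the probability that an $N_{0,1}$ variable scaled by $\sqrt T$ lands in the relevant half-line.

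It then remains to compute two Gaussian tail probabilities. Under $\Wiener$ the coordinate at time $T$ is distributed as $\sqrt T\,\xi$ with $\xi\sim N_{0,1}$, so $\LowProbI(\tilde\phi^{-1}(E_1))=\Prob(\xi>-z_\delta)$ and $\LowProbI(\tilde\phi^{-1}(E_2))=\Prob(\xi<z_\delta)$; by the symmetry of $N_{0,1}$ and the defining property $\Prob(\xi\ge z_\delta)=\delta$, each of these equals $1-\delta$. Strict versus nonstrict inequalities are immaterial, since the law of the value at time $T$ has no atoms. There is no substantial obstacle: all the analytic work was already carried out in proving Corollary~\ref{cor:BM}, and the only point needing a little care is that its pushforward relation is an exact equality, so that it pins down the lower probability and not just the upper one. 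The degenerate case $\delta\ge1$ is handled by the same convention $z_\delta=-\infty$ adopted for Corollary~\ref{cor:2-sided}.
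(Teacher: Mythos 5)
Your proposal is correct and follows essentially the same route as the paper, which simply states that Corollary~\ref{cor:1-sided} ``follows immediately from Corollary~\ref{cor:BM}''; your write-up just makes explicit what that immediate deduction involves (the exact pushforward identity $\UpProbI(\tilde\phi^{-1}(E))=\Wiener(E)$, passage to complements to get $\LowProbI$, and the Gaussian tail computation). The point you single out---that the pushforward relation is an equality, so it determines lower as well as upper $I$-probabilities---is indeed exactly what makes the paper's one-line deduction valid.
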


Corollaries~\ref{cor:2-sided} and~\ref{cor:1-sided} follow immediately from Corollary~\ref{cor:BM}.
% It is clear
% that those two corollaries are tight in the sense that the factor $1/\delta$ cannot be improved.
Corollary~\ref{cor:BM} also immediately implies
the following law of the iterated logarithm for the equity premium:

\begin{corollary}\label{cor:EP-LIL}
  It is $\UpProbI$-almost certain that
  \[
    \limsup_{t\to\infty}
    \frac{\ln I(t) - \Sigma^I_t/2}{\sqrt{2 \Sigma^I_t \ln\ln \Sigma^I_t}}
    =
    1
  \]
  and
  \[
    \liminf_{t\to\infty}
    \frac{\ln I(t) - \Sigma^I_t/2}{\sqrt{2 \Sigma^I_t \ln\ln \Sigma^I_t}}
    =
    -1.
  \]
\end{corollary}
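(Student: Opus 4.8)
The plan is to read the statement off Corollary~\ref{cor:BM} by applying the classical law of the iterated logarithm to the Brownian motion it produces and then undoing the time change. Write $W_t := \ln I(\tau_t) - t/2$, which by Corollary~\ref{cor:BM} is a Brownian motion with respect to $\UpProbI$. The starting point is Khinchin's LIL for standard Brownian motion: the set of paths $w\in C[0,\infty)$ satisfying $\limsup_{t\to\infty} w(t)/\sqrt{2t\ln\ln t}=1$ and $\liminf_{t\to\infty} w(t)/\sqrt{2t\ln\ln t}=-1$ has Wiener measure one. Its complement is a Borel subset of $C[0,\infty)$ of $\Wiener$-measure zero, so its pullback under $\omega\mapsto(t\mapsto\ln I(\tau_t)-t/2)$ has $\UpProbI$-upper probability zero; this is exactly the sense in which ``Brownian motion with respect to $\UpProbI$'' transports a Wiener-almost-sure path property into a $\UpProbI$-almost-sure one. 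Hence, $\UpProbI$-almost surely,
\[
  \limsup_{t\to\infty}\frac{\ln I(\tau_t)-t/2}{\sqrt{2t\ln\ln t}}=1,
  \qquad
  \liminf_{t\to\infty}\frac{\ln I(\tau_t)-t/2}{\sqrt{2t\ln\ln t}}=-1.
\]

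Next I would substitute $t=\Sigma^I_s$, where $s$ denotes physical time, and let $s\to\infty$. Recall from Section~\ref{sec:existence-1} that on the domain we work with $\Sigma^I$ is continuous, $\Sigma^I_0=0$, and $\Sigma^I_\infty=\infty$, so $s\mapsto\Sigma^I_s$ is a continuous nondecreasing surjection of $[0,\infty)$ onto $[0,\infty)$, and $\tau_t$ is its generalized inverse, giving $\Sigma^I_{\tau_t}=t$ and $\tau_{\Sigma^I_s}\le s$ with $\Sigma^I$ constant on $[\tau_{\Sigma^I_s},s]$. Granting the identity $\ln I(\tau_{\Sigma^I_s})=\ln I(s)$, the quantity $W_{\Sigma^I_s}$ becomes $\ln I(s)-\Sigma^I_s/2$, the numerator of the statement, while the denominator $\sqrt{2t\ln\ln t}$ turns into $\sqrt{2\Sigma^I_s\ln\ln\Sigma^I_s}$ automatically. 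Because $\Sigma^I$ is a continuous surjection, the image of $[s_0,\infty)$ is exactly $[\Sigma^I_{s_0},\infty)$, so the supremum over $s\ge s_0$ equals the supremum over $t\ge\Sigma^I_{s_0}$; letting $s_0\to\infty$ shows the $\limsup$ (and likewise $\liminf$) in $s$ agrees with the one in $t$. Combining this with the displayed LIL and renaming $s$ as $t$ yields the two claimed limits.

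I expect the only real work to be this time-change bookkeeping, and in particular justifying $\ln I(\tau_{\Sigma^I_s})=\ln I(s)$, i.e.\ that $\UpProbI$-almost surely $I$ does not move across flat intervals of $\Sigma^I$ (note that a merely monotone move of $I$ contributes nothing to $\Sigma^I$, so this is not a triviality of the pathwise definitions alone). The clean resolution is to use that $W$ is $\UpProbI$-almost surely continuous in $t$: a genuine move of $\ln I$ on an interval where $\Sigma^I$ is constant would force a discontinuity of $W=\ln I(\tau_{\cdot})-\cdot/2$ at the corresponding level, which Brownian continuity excludes. Equivalently, under the Dubins--Schwarz correspondence used in Lemma~\ref{lem:BM} the relative quadratic variation $\Sigma^I$ is strictly increasing almost surely, so no flat intervals arise and $\tau_{\Sigma^I_s}=s$ identically. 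Either route removes the one subtlety, after which the remaining steps are routine.
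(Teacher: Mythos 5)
Your skeleton is the one the paper intends (the paper offers no proof beyond ``Corollary~\ref{cor:BM} also immediately implies''), and the two \emph{sharpness} halves are indeed clean: evaluating at the physical times $s=\tau_t$ gives $\ln I(\tau_t)-\Sigma^I_{\tau_t}/2=W_t$ and $\Sigma^I_{\tau_t}=t$, so the $\limsup$ over all physical times dominates $\limsup_{t\to\infty}W_t/\sqrt{2t\ln\ln t}=1$, and similarly for the $\liminf$. The genuine gap is in your treatment of the subtlety you yourself isolated, and it hits exactly the two \emph{bounding} halves ($\limsup\le1$ and $\liminf\ge-1$), which require controlling $\ln I(s)$ at \emph{every} large $s$, including times inside intervals of constancy of $\Sigma^I$. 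Your first resolution (continuity of $W$) only excludes a \emph{net} displacement across a flat interval: if $\Sigma^I$ is constant on a maximal interval $[a,b]$ with $\Sigma^I_a=t_0$, and $I$ makes a bounded-variation excursion on $[a,b]$ with $I(a)=I(b)$ (such paths are in the domain, since bounded-variation moves contribute nothing to $\Sigma^I$), then $W(t_0)=\ln I(a)-t_0/2$ equals the right limit $\ln I(b)-t_0/2$, so $W$ is perfectly continuous, yet $\ln I(s)\neq\ln I(\tau_{\Sigma^I_s})$ inside the excursion. Arbitrarily large excursions of this type at arbitrarily late times are invisible to $W$ and would destroy the upper bounds, so continuity of the time-changed path cannot rule them out.

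Your second resolution is false rather than incomplete: ``$\Sigma^I$ is strictly increasing $\UpProbI$-a.s.'' cannot be deduced from Corollary~\ref{cor:BM}, because the map $\omega\mapsto(t\mapsto\ln I(\tau_t)-t/2)$ collapses flat intervals, so the statement that its pushforward is Wiener measure says nothing about them; and it is not true, since paths on which the index simply pauses (is constant on some time interval, making $\Sigma^I$ flat there) do not form a $\UpProbI$-null set---a time-changed geometric Brownian motion that pauses is still a local martingale, and no strategy beats it by a large factor (apply, after changing the num\'eraire to $I$, the same comparison the paper uses to show $\UpProb(E)\ge\Prob(E)$ in local-martingale models). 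What you actually need is the quasi-always statement sitting behind Lemma~\ref{lem:BM} via Theorem~3.1 of \cite{GTP28}: quasi-always, the price path is a time change of a Wiener path, hence $I$ is constant on every interval of constancy of $\Sigma^I$. That is a property of $\omega$ itself, not of its image under the time change, and outside its exceptional null set the identity $\ln I(\tau_{\Sigma^I_s})=\ln I(s)$ holds and your computation goes through. Alternatively, you can stay inside the paper: take the bounding halves from Corollary~\ref{cor:EP-LIL-weak} combined with Lemma~\ref{lem:Mu} (so that $\Mu^I_t-\Sigma^I_t=\ln I(t)-\Sigma^I_t/2$ q.a.), using its self-contained mixing proof (see Proposition~\ref{prop:CAPM-LIL}), which bounds the ratio at every physical time, flat intervals included, and keep Corollary~\ref{cor:BM} only for the sharpness halves along $s=\tau_t$.
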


\section{Equity-premium (2): mixing method}
\label{sec:EPM}

In this section we will discuss an alternative approach to the equity premium phenomenon,
which will also be used to derive a probability-free CAPM
in Section~\ref{sec:CAPM}.
The test $I$-supermartingale whose existence is implicitly asserted in~\eqref{eq:2-sided}
is, in a sense, reckless:
it beats the index $1/\delta$-fold if the event in the curly braces fails to happen
but can (and does) lose everything if it happens.
In this section we will discuss safer (more conservative) trading strategies
instead of ``all-or-nothing'' trading strategies
fine-tuned to the event of interest (such as the one in~\eqref{eq:2-sided}).

\begin{lemma}\label{lem:supermartingale-1}
  For each $\epsilon\in\bbbr$, the process
  \begin{equation}\label{eq:supermartingale-1}
    \exp
    \left(
      \epsilon (\Mu^I_t - \Sigma^I_t)
      -
      \frac{\epsilon^2}{2} \Sigma^I_t
    \right)
  \end{equation}
  is a test $I$-supermartingale q.a.
\end{lemma}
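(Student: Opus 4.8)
The plan is to reduce the statement to the cash num\'eraire via Remark~\ref{rem:I-supermartingales} and then to exhibit the resulting process as a $\liminf$ of explicit nonnegative simple capital processes coming from fixed-fraction rebalancing in $I$. Set $a:=1+\epsilon$. Using Lemma~\ref{lem:Mu} to substitute $\ln I_t=\Mu^I_t-\tfrac12\Sigma^I_t$, a direct computation shows that, quasi-always,
\[
  \exp\left(\epsilon(\Mu^I_t-\Sigma^I_t)-\frac{\epsilon^2}{2}\Sigma^I_t\right)
  =\frac{1}{I_t}\exp\left(a\,\Mu^I_t-\frac{a^2}{2}\Sigma^I_t\right).
\]
By Remark~\ref{rem:I-supermartingales} it therefore suffices to prove that $X_t:=\exp(a\,\Mu^I_t-\tfrac{a^2}{2}\Sigma^I_t)$ is a test supermartingale quasi-always; dividing by $I$ then yields \eqref{eq:supermartingale-1}.

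To build $X$, I would fix $n$ and consider the self-financing strategy that over each interval $[T^n_{k-1},T^n_k]$ keeps the fraction $a$ of current capital invested in $I$ and the rest in cash. Writing $W^{a,n}$ for its capital process started from $W^{a,n}_0=1$, one checks that at the grid points
\[
  W^{a,n}_{T^n_m}=\prod_{k=1}^m\left(1+a\,m_k\right),
  \qquad
  m_k:=\frac{I(T^n_k)-I(T^n_{k-1})}{I(T^n_{k-1})},
\]
exactly the increments \eqref{eq:m}. This $W^{a,n}$ is a simple capital process investing only in $I$; the point is to make it a genuine nonnegative one with bounded holdings. I would do this by stopping the strategy (holding cash thereafter) as soon as $I$ falls below $\lvert a\rvert 2^{-n+1}$, which guarantees $1+a\,m_k>0$ and hence $W^{a,n}\ge0$, and as soon as the capital reaches a fixed level $L$, which bounds the number of units $a\,W^{a,n}/I$ held. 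On the good event where $\min_{s\in[0,t]}I(s)>0$ and $\Sigma^I_t<\infty$, neither stopping rule is triggered before $t$ once $n$ and $L$ are large, so these truncations are harmless in the limit.

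The convergence $W^{a,n}_t\to X_t$ is then a rerun of the proof of Lemma~\ref{lem:Mu}: taking logarithms and expanding $\ln(1+a\,m_k)=a\,m_k-\tfrac12 a^2 m_k^2+O(\lvert m_k\rvert^3)$ gives
\[
  \ln W^{a,n}_t=a\,\Mu^{I,n}_t-\frac{a^2}{2}\Sigma^{I,n}_t+O\Bigl(\sum_k\lvert m_k\rvert^3\Bigr),
\]
where the remainder is $o(1)$ because $\lvert m_k\rvert\le 2^{-n}/\min_{s\in[0,t]}I(s)\to0$ while $\sum_k m_k^2=\Sigma^{I,n}_t$ stays bounded. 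Combined with the quasi-always convergences $\Mu^{I,n}\to\Mu^I$ (Lemma~\ref{lem:Mu}) and $\Sigma^{I,n}\to\Sigma^I$ (Section~\ref{sec:existence-1}), this yields $W^{a,n}_t\to X_t$, uniformly on compacts, quasi-always. Since the class of test supermartingales is $\liminf$-closed, the iterated $\liminf$ over $L$ and over $n$ of the (nonnegative, bounded-holding) processes $W^{a,n}$ is a genuine test supermartingale agreeing with $X$ on the good event; as that event holds quasi-always, $X$ is a test supermartingale q.a., and $X/I$ is the asserted test $I$-supermartingale.

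The main obstacle I anticipate is not the analysis but the bookkeeping that turns the formal fixed-fraction strategy into an admissible simple trading strategy: simultaneously ensuring nonnegativity of $W^{a,n}$ (preventing $1+a\,m_k\le0$ when $I$ is small) and boundedness of the holdings $a\,W^{a,n}/I$, while arranging the stopping levels so that they recede and do not distort the limit on the quasi-always event. The double $\liminf$, over the capital cap $L$ and over $n$, is precisely what lets both truncations disappear in the limit without leaving the $\liminf$-closed class.
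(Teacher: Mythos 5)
Your proposal is correct and follows essentially the same route as the paper: both rest on the fixed-fraction strategy whose capital at the grid points is $\prod_{k}(1+(1+\epsilon)m_k)$, a Taylor expansion of its log against that of $I$, and passage to the limit in $n$ via $\liminf$-closure. The only differences are cosmetic: the paper works directly with the ratio of this capital process to $I$ (a class-0 test $I$-supermartingale), so the expansion of $\ln I$ that you import through Lemma~\ref{lem:Mu} appears there as a difference of logs, and the paper leaves implicit the stopping/truncation bookkeeping (nonnegativity, bounded holdings, the cap $L$) that you spell out.
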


\noindent
In other words,
the lemma says that the process~\eqref{eq:supermartingale-1}
coincides with a test $I$-supermartingale quasi-always.

\begin{proof}
  Let us show that \eqref{eq:supermartingale-1} is a class 1 test $I$-supermartingale
  (see Remark~\ref{rem:I-supermartingales} for the definition).

  The value of the index $I$ at time $T^n_K$ is $\prod_{k=1}^K (1+m_k)$
  (where $m_k$ is defined by~\eqref{eq:m}).
  Let us consider the simple capital process whose value at time $T^n_K$ is
  $\prod_{k=1}^K (1+(1+\epsilon)m_k)$
  (which should be stopped as soon as the capital hits 0).
  Intuitively, we are mixing the returns $m_k$ of $I$ and the returns $0$ of cash
  (and this is a convex mixture when $\epsilon\in[-1,0]$);
  when $\left|\epsilon\right|$ is small
  (which case is important for limit theorems such as the law of the iterated logarithm
  in Corollary~\ref{cor:EP-LIL-weak}),
  the new simple capital process can be regarded as a perturbation of $I$.

  We can see that
  \[
    \ln \prod_{k=1}^K (1+(1+\epsilon)m_k)
    -
    \ln \prod_{k=1}^K (1+m_k)
  \]
  is the value at time $T^n_K$ of the log of a test $I$-supermartingale (of class 0).
  In combination with Taylor's expansion, this implies that
  \[
    \epsilon \sum_{k=1}^K m_k
    -
    \epsilon \sum_{k=1}^K m_k^2
    -
    \frac{\epsilon^2}{2} \sum_{k=1}^K m_k^2
    +
    O
    \left(
      \sum_{k=1}^K \left|m_k\right|^3
    \right)
  \]
  is the value at time $T^n_K$ of the log of a test $I$-supermartingale.
  Passing to the limit as $n\to\infty$ (and remembering that the variation index of $I$
  over compact intervals does not exceed 2 quasi-always \cite{GTP28}),
  we obtain that
  \[
    \epsilon (\Mu^I_t - \Sigma^I_t)
    -
    \frac{\epsilon^2}{2} \Sigma^I_t
  \]
  is the log of a test $I$-supermartingale (of class 1) q.a.
\end{proof}

\begin{corollary}
  For any $\epsilon>0$ and $\delta>0$,
  \[
    \LowProbI
    \left\{
      \forall t\in[0,\infty):
      \left|
        \Mu^I_t - \Sigma^I_t
      \right|
      <
      \frac{1}{\epsilon}\ln\frac{2}{\delta}
      +
      \frac{\epsilon}{2} \Sigma^I_t
    \right\}
    \ge
    1-\delta.
  \]
\end{corollary}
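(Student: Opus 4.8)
The plan is to pass to the complement and control the two one-sided deviations separately with the exponential test $I$-supermartingale of Lemma~\ref{lem:supermartingale-1}, then combine them by a union bound. Setting
\[
  E_\pm
  :=
  \bigl\{
    \omega
    \st
    \exists t\in[0,\infty):
    \pm\bigl(\Mu^I_t - \Sigma^I_t\bigr)
    \ge
    \tfrac{1}{\epsilon}\ln\tfrac{2}{\delta} + \tfrac{\epsilon}{2}\Sigma^I_t
  \bigr\},
\]
the event in the corollary is exactly the complement of $E_+\cup E_-$, so it suffices to prove $\UpProbI(E_+\cup E_-)\le\delta$.

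For $E_+$ I would take the process $X^+$ furnished by Lemma~\ref{lem:supermartingale-1} with parameter $+\epsilon$; it equals $\exp(\epsilon(\Mu^I_t-\Sigma^I_t)-\tfrac{\epsilon^2}{2}\Sigma^I_t)$ quasi-always and has $X^+_0=1$, since $\Mu^I_0=\Sigma^I_0=0$. Multiplying the defining inequality of $E_+$ by $\epsilon>0$ and rearranging gives, at any witnessing time $t$, the bound $\epsilon(\Mu^I_t-\Sigma^I_t)-\tfrac{\epsilon^2}{2}\Sigma^I_t\ge\ln\tfrac{2}{\delta}$, whence $X^+_t\ge 2/\delta$; thus $\sup_t X^+_t\ge 2/\delta$ on $E_+$. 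The same computation with parameter $-\epsilon$ yields a test $I$-supermartingale $X^-$ with $X^-_0=1$ and $\sup_t X^-_t\ge 2/\delta$ on $E_-$.

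I would then set $X:=\tfrac{\delta}{2}(X^++X^-)$. The class of test $I$-supermartingales is a convex cone---closure under addition and nonnegative scaling follows from a routine induction along the $\liminf$-closure that defines it---so $X$ is again a test $I$-supermartingale, with $X_0=\delta$. On $E_+\cup E_-$ one of the summands attains $2/\delta$ while the other stays nonnegative, so $\sup_t X_t\ge 1$ there; Lemma~\ref{lem:robustness}, which allows $\sup_t$ in place of $\liminf_t$ in~\eqref{eq:upper-probability-I}, then delivers $\UpProbI(E_+\cup E_-)\le\delta$.

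The one place that needs care, and the only thing I expect to be delicate, is that Lemma~\ref{lem:supermartingale-1} guarantees $X^+$ and $X^-$ to be test $I$-supermartingales only quasi-always. The exceptional set on which either fails---together with the set where $\Mu^I$ or $\Sigma^I$ does not exist---carries, by the definition of ``quasi-always,'' a test $I$-supermartingale $Z$ with $Z_0=1$ and $\sup_t Z_t=\infty$ on it. I would absorb this by passing to $X+\eta Z$ for $\eta>0$: this is a genuine test $I$-supermartingale whose supremum is $\ge 1$ on $E_+\cup E_-$ and on the exceptional set, at initial cost $\delta+\eta$, and letting $\eta\downarrow 0$ restores the bound $\delta$. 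Beyond this bookkeeping there is no analytic obstacle: the heart of the argument is simply the exponential (Ville-type) maximal inequality applied once on each side.
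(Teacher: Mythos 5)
Your proposal is correct and follows essentially the same route as the paper's own proof: apply Lemma~\ref{lem:supermartingale-1} with parameters $+\epsilon$ and $-\epsilon$, use Lemma~\ref{lem:robustness} to pass to the $\sup_t$ form (Ville's maximal inequality), and combine the two one-sided bounds, each costing $\delta/2$, by a union bound. The only difference is that you spell out what the paper leaves implicit---implementing the union bound by averaging the two test $I$-supermartingales (via the cone property established by transfinite induction) and absorbing the quasi-always exceptional set with the auxiliary supermartingale $\eta Z$---both of which are sound.
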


\begin{proof}
  Fix $\epsilon>0$ and $\delta>0$.
  By Lemmas~\ref{lem:robustness} and~\ref{lem:supermartingale-1},
  with lower $I$-probability at least $1-\delta/2$ we will have
  \[
    \forall t\in[0,\infty):
    \epsilon (\Mu^I_t - \Sigma^I_t)
    -
    \frac{\epsilon^2}{2} \Sigma^I_t
    <
    \ln\frac{2}{\delta}.
  \]
  Dividing both sides by $\epsilon$
  and considering the same test $I$-supermartingale but with $-\epsilon$ in place of $\epsilon$,
  we obtain
  \[
    \forall t\in[0,\infty):
    \left|
      \Mu^I_t - \Sigma^I_t
    \right|
    <
    \frac{1}{\epsilon}\ln\frac{2}{\delta}
    +
    \frac{\epsilon}{2} \Sigma^I_t,
  \]
  with lower probability at least $1-\delta$.
\end{proof}

The following corollary is in the spirit of Corollary~\ref{cor:2-sided}.

\begin{corollary}\label{cor:EP-given-epsilon}
  If $\delta>0$, $\epsilon>0$, and $\tau_T:=\inf\{t\st \Sigma^I_t\ge T\}$ for some constant $T>0$,
  \begin{equation}\label{eq:EP-given-epsilon}
    \LowProbI
    \left\{
      \left|
        \Mu^I_{\tau_T} - T
      \right|
      <
      \frac{1}{\epsilon}\ln\frac{2}{\delta}
      +
      \frac{\epsilon}{2} T
    \right\}
    \ge
    1-\delta.
  \end{equation}
\end{corollary}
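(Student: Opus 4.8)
The plan is to read off this corollary as the special case $t=\tau_T$ of the preceding (unnumbered) corollary. That corollary guarantees, with lower $I$-probability at least $1-\delta$, that
\[
  \left|\Mu^I_t - \Sigma^I_t\right|
  <
  \frac{1}{\epsilon}\ln\frac{2}{\delta}
  +
  \frac{\epsilon}{2}\Sigma^I_t
\]
holds simultaneously for all $t\in[0,\infty)$. The point is that at the stopping time $\tau_T=\inf\{t\st\Sigma^I_t\ge T\}$ the running relative quadratic variation equals $T$, so substituting $t=\tau_T$ converts the generic bound into precisely the event in~\eqref{eq:EP-given-epsilon}.

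First I would check that $\Sigma^I_{\tau_T}=T$ holds quasi-always. This rests on two facts recorded earlier: $\Sigma^I$ is continuous q.a.\ (the approximations $\Sigma^{I,n}$ converge to it uniformly on compact intervals, Section~\ref{sec:existence-1}), and Reality has been required to ensure $\Sigma^I_\infty=\infty$, so that $\tau_T<\infty$. Continuity together with the definition of $\tau_T$ as the first hitting time of level $T$ then forces $\Sigma^I_{\tau_T}=T$, with no overshoot.

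It then remains to invoke monotonicity of lower $I$-probability. Whenever the universally quantified inequality of the preceding corollary holds, it holds in particular at $t=\tau_T$, where $\Sigma^I_{\tau_T}=T$; hence that event is contained (quasi-always) in $\bigl\{\left|\Mu^I_{\tau_T}-T\right|<\frac1\epsilon\ln\frac2\delta+\frac\epsilon2 T\bigr\}$. Since $A\subseteq B$ implies $\LowProbI(A)\le\LowProbI(B)$, and q.a.\ modifications leave $\LowProbI$ unchanged, the event in~\eqref{eq:EP-given-epsilon} inherits lower $I$-probability at least $1-\delta$. The only delicate step is the identity $\Sigma^I_{\tau_T}=T$, and even that is routine once continuity of $\Sigma^I$ is in hand; there is no substantive obstacle.
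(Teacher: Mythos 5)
Your proof is correct and takes essentially the same route as the paper: the paper states this corollary without any proof, treating it as the immediate specialization $t=\tau_T$ of the preceding corollary, which rests on exactly the hitting-time identity $\Sigma^I_{\tau_T}=T$ (via q.a.\ continuity of $\Sigma^I$ and the requirement $\Sigma^I_\infty=\infty$) together with monotonicity of $\LowProbI$ that you spell out.
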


It is natural to optimize the $\epsilon$ in \eqref{eq:EP-given-epsilon} given $\delta$ and $T$,
\begin{equation}\label{eq:min}
  \min_{\epsilon>0}
  \left(
    \frac{1}{\epsilon}\ln\frac{2}{\delta}
    +
    \frac{\epsilon}{2} T
  \right)
  =
  \sqrt{2T\ln\frac{2}{\delta}},
\end{equation}
which gives our final corollary in this direction.

\begin{corollary}
  If $\delta>0$ and $\tau_T:=\inf\{t\st \Sigma^I_t\ge T\}$ for some constant $T>0$,
  \begin{equation*}
    \LowProbI
    \left\{
      \left|
        \Mu^I_{\tau_T} - T
      \right|
      <
      \sqrt{2T\ln\frac{2}{\delta}}
    \right\}
    \ge
    1-\delta,
  \end{equation*}
  or, equivalently (by Lemma~\ref{lem:Mu}),
  \begin{equation}\label{eq:EP-tuned-epsilon}
    \LowProbI
    \left\{
      \left|
        \ln I_{\tau_T} - T/2
      \right|
      <
      \sqrt{2T\ln\frac{2}{\delta}}
    \right\}
    \ge
    1-\delta.
  \end{equation}
\end{corollary}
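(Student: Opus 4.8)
The plan is to deduce both displays directly from Corollary~\ref{cor:EP-given-epsilon} by choosing the value of the free parameter $\epsilon$ that minimizes the threshold on the right-hand side of~\eqref{eq:EP-given-epsilon}. First I would dispose of the trivial range $\delta\ge1$, where $1-\delta\le0$ and the asserted lower bound on a (nonnegative) $I$-lower probability holds automatically; so I may assume $\delta\in(0,1)$, which guarantees $\ln\frac2\delta>0$ and hence that the optimizer is a genuine positive number.

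Next I would carry out the one-variable minimization already recorded in~\eqref{eq:min}. Differentiating $\epsilon\mapsto\frac1\epsilon\ln\frac2\delta+\frac\epsilon2T$ and setting the derivative to zero gives the optimizer $\epsilon^\ast=\sqrt{\frac2T\ln\frac2\delta}$, at which the expression attains the value $\sqrt{2T\ln\frac2\delta}$. The key observation is then purely set-theoretic: with $\epsilon=\epsilon^\ast$, the event inside the braces of~\eqref{eq:EP-given-epsilon} becomes \emph{identical} to the event $\bigl\{\left|\Mu^I_{\tau_T}-T\right|<\sqrt{2T\ln\frac2\delta}\bigr\}$ appearing in the present corollary. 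Applying Corollary~\ref{cor:EP-given-epsilon} with this particular $\epsilon$ therefore yields the first display immediately, with no further estimation.

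For the second, equivalent display I would invoke Lemma~\ref{lem:Mu}, which gives $\Mu^I_t=\ln I(t)+\tfrac12\Sigma^I_t$ quasi-always. Evaluating at $t=\tau_T$ and using that $\Sigma^I$ is continuous quasi-always (as recorded in Section~\ref{sec:existence-1}), so that $\Sigma^I_{\tau_T}=T$ at the level set defining $\tau_T$, I obtain $\Mu^I_{\tau_T}-T=\ln I_{\tau_T}-T/2$; substituting this identity turns the first display into~\eqref{eq:EP-tuned-epsilon}.

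The computation above is entirely routine, so there is essentially no analytic obstacle, and the only delicate points are bookkeeping ones. I expect the main (minor) subtlety to be the passage from $\Mu^I_{\tau_T}$ to $\ln I_{\tau_T}$: it rests on the continuity of $\Sigma^I$ to secure $\Sigma^I_{\tau_T}=T$, an identity valid only quasi-always, so one must check that two events coinciding quasi-always have the same lower $I$-probability (their symmetric difference is absorbed into the failure set of a test $I$-supermartingale that diverges to $\infty$). One should also confirm that $\epsilon^\ast$ is admissible (strictly positive) throughout the nontrivial range, which is precisely why reducing to $\delta\in(0,1)$ is the sensible first step.
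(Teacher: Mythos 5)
Your proof is correct and takes essentially the same route as the paper: the paper obtains this corollary precisely by substituting the optimizer $\epsilon^\ast=\sqrt{\tfrac{2}{T}\ln\tfrac{2}{\delta}}$ from \eqref{eq:min} into Corollary~\ref{cor:EP-given-epsilon}, and then passes to the second display via Lemma~\ref{lem:Mu}. Your additional bookkeeping (the trivial range $\delta\ge1$, and the quasi-always identity $\Sigma^I_{\tau_T}=T$ needed to rewrite $\Mu^I_{\tau_T}-T$ as $\ln I_{\tau_T}-T/2$) only makes explicit what the paper leaves implicit.
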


It is instructive to compare~\eqref{eq:EP-tuned-epsilon} and \eqref{eq:2-sided},
which are obtained using two very different methods
(especially that for the more general CAPM-type results of the following sections
we will be able to use only the second, more conservative, method).
The difference between the two inequalities for $I(\tau_T)$
asserted with lower probability of (at least) $1-\delta$
boils down to the difference between the functions
\[
  X(q):=z_q \text{ and } \eta(q):=\sqrt{2\ln\frac{1}{q}},
\]
in the notation of Hastings \cite{Hastings:1955},
who considers $q\in(0,0.5]$.
Hastings gives two approximations to $X(q)$
(pp.~191--192, reproduced in \cite{Abramowitz/Stegun:1964}, 26.2.22 and 26.2.23)
as the optimal, in a minimax sense,
product of $\eta(q)$ and a rational function of $\eta(q)$
(the two approximations correspond to different degrees of the polynomials in the numerator and denominator
of the rational function).

It is easy to check that $X(q)\sim\eta(q)$ as $q\to0$.
Figure~\ref{fig:inequality} compares the two functions over the range $q\in[10^{-5},0.5]$
($q=0.5$ corresponding to the trivial value $\delta=1$).

\begin{figure}[t]
  \begin{center}
    \includegraphics[width=0.7\columnwidth]{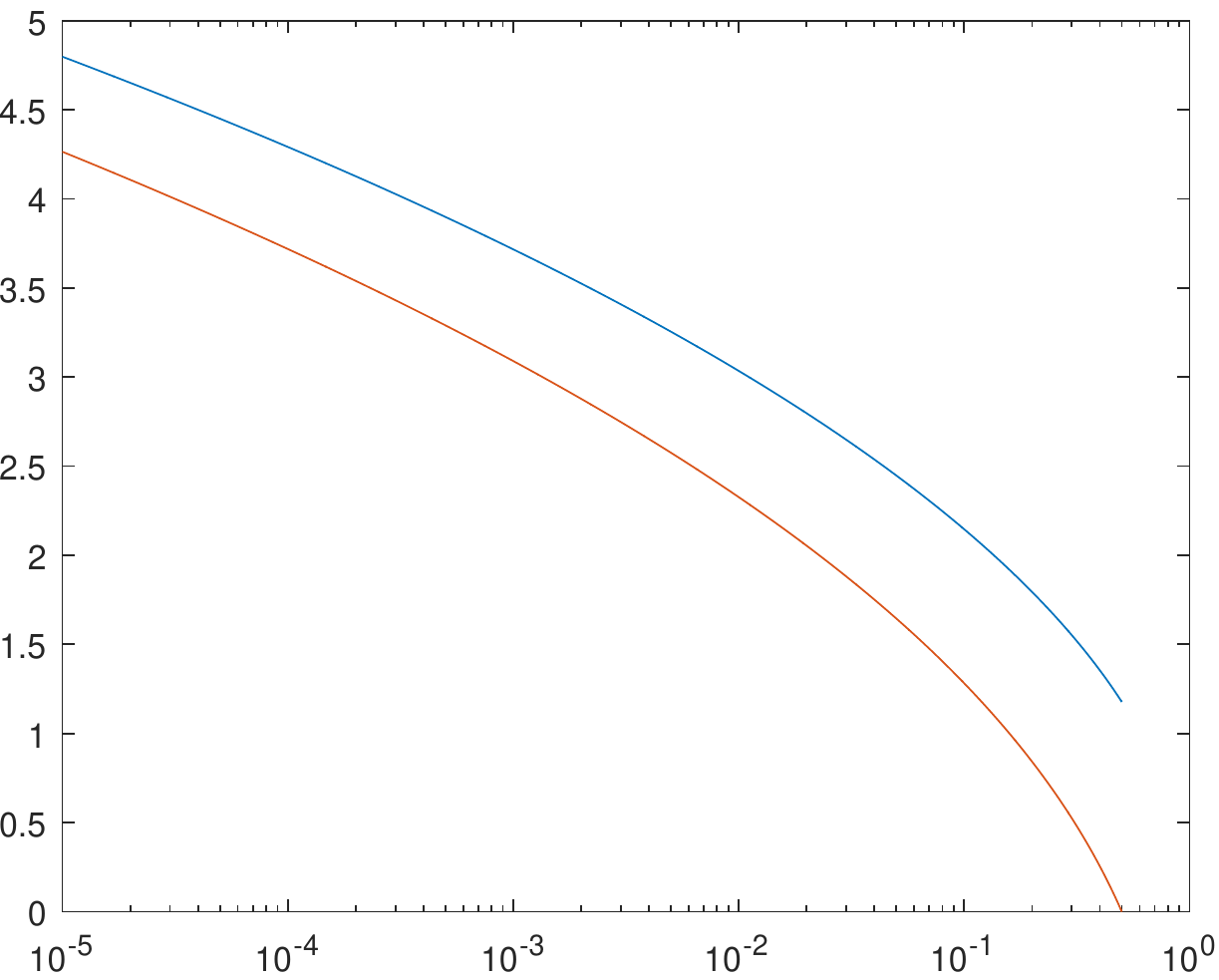}
  \end{center}
  \caption{The functions $X$ (in red, lower) and $\eta$ (in blue, higher) over $[10^{-5},0.5]$}
  \label{fig:inequality}
\end{figure}

It is easy to prove the validity part of a LIL for the equity premium
using Ville's \cite{Ville:1939} method.

\begin{corollary}\label{cor:EP-LIL-weak}
  Almost surely w.r.\ to $\UpProbI$,
  \[
    \limsup_{t\to\infty}
    \frac{\left|\Mu^I_t-\Sigma^I_t\right|}{\sqrt{2\Sigma^I_t\ln\ln\Sigma^I_t}}
    \le
    1.
  \]
\end{corollary}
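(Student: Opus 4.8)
The plan is to run Ville's classical exponential-supermartingale proof of the upper half of the law of the iterated logarithm inside the game-theoretic framework, using the family of test $I$-supermartingales supplied by Lemma~\ref{lem:supermartingale-1}. Write $M_t := \Mu^I_t - \Sigma^I_t$, $A_t := \Sigma^I_t$, and $h(a) := \sqrt{2a\ln\ln a}$; recall that $A_t$ is continuous q.a.\ and that $A_\infty = \infty$ by our standing assumption on Reality. Since Lemma~\ref{lem:supermartingale-1} gives, for every $\epsilon\in\bbbr$, that $Y^{(\epsilon)}_t := \exp(\epsilon M_t - \tfrac{\epsilon^2}{2}A_t)$ is (q.a.)\ a test $I$-supermartingale with $Y^{(\epsilon)}_0 = 1$, and since applying the argument with $-\epsilon$ in place of $\epsilon$ bounds $-M_t$ symmetrically, it suffices to prove $\limsup_{t\to\infty} M_t/h(A_t) \le 1$ almost surely w.r.\ to $\UpProbI$; the two one-sided bounds then combine to give the absolute value. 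I would fix $\eta>0$ and show that $E_\eta := \{\limsup_t M_t/h(A_t) > 1+\eta\}$ has $I$-upper probability zero, then take the countable union over $\eta = 1/j$.

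The engine is Ville's inequality: by Lemma~\ref{lem:robustness} and the definition of $\UpProbI$, $\UpProbI\{\sup_t Y^{(\epsilon)}_t \ge \lambda\} \le 1/\lambda$, so outside a set of $I$-upper probability $\le 1/\lambda$ one has $M_t < \frac{\ln\lambda}{\epsilon} + \frac{\epsilon}{2}A_t$ for all $t$ simultaneously (for $\epsilon>0$). I would discretize along a geometric scale $a_k := \theta^k$ with $\theta>1$, and on the $k$-th block choose $\epsilon_k := \sqrt{2\ln\ln a_k / a_k}$ (the balance point of \eqref{eq:min} at $A=a_k$) together with a threshold $\lambda_k$ with $\ln\lambda_k := (1+\eta')\ln\ln a_k$, so that $1/\lambda_k = (k\ln\theta)^{-(1+\eta')}$ is summable. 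Let $B_k := \{\sup_t Y^{(\epsilon_k)}_t \ge \lambda_k\}$. A short computation gives $\frac{\ln\lambda_k}{\epsilon_k} = (1+\eta')h(a_k)/2$ and $\frac{\epsilon_k}{2}a_k = h(a_k)/2$, so that off $B_k$ every $t$ with $A_t \le a_k$ satisfies $M_t < (1+\eta'/2)h(a_k)$; since $A_t \ge a_{k-1}$ on the block and $h(a_k)/h(a_{k-1}) \to \sqrt\theta$, this reads $M_t < (1+\eta'/2)\sqrt\theta\,(1+o(1))\,h(A_t)$. Choosing $\theta$ near $1$ and $\eta'$ near $0$ so that $(1+\eta'/2)\sqrt\theta < 1+\eta$ forces $E_\eta \subseteq \limsup_k B_k$ (if $M_t > (1+\eta)h(A_t)$ for arbitrarily large $t$, the block containing such a $t$ must be bad).

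It then remains to prove a game-theoretic Borel--Cantelli statement: $\sum_k \UpProbI(B_k) < \infty$ implies $\UpProbI(\limsup_k B_k) = 0$. This is the step I expect to be the real obstacle, because a supermartingale that peaks above $\lambda_k$ on $B_k$ may fall back afterwards, so naively summing the $Y^{(\epsilon_k)}/\lambda_k$ need not diverge at any single time. I would circumvent this exactly as in the proof of Lemma~\ref{lem:robustness}: replace each $Y^{(\epsilon_k)}/\lambda_k$ (which starts below $1$) by its frozen version $(Y^{(\epsilon_k)}/\lambda_k)^*$, which equals $1$ from the first time it reaches $1$ onward and is again a test $I$-supermartingale with the same initial value $1/\lambda_k$. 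Their sum $Y := \sum_k (Y^{(\epsilon_k)}/\lambda_k)^*$ is a test $I$-supermartingale (the increasing partial sums lie in the $\liminf$-closed class) with $Y_0 = \sum_k 1/\lambda_k < \infty$; on $\limsup_k B_k$ infinitely many summands are eventually equal to $1$, so $\liminf_t Y_t = \infty$. Scaling $Y$ by an arbitrarily small constant then shows $\UpProbI(\limsup_k B_k) = 0$, hence $\UpProbI(E_\eta) = 0$. Finally I would absorb the q.a.-exceptional sets on which the $Y^{(\epsilon_k)}$ fail to be genuine test $I$-supermartingales (these have $\UpProbI$-measure zero) and take the union over $\eta = 1/j$ and over the sign of $\epsilon$. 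The remaining work — verifying the two identities for $\frac{\ln\lambda_k}{\epsilon_k}$ and $\frac{\epsilon_k}{2}a_k$ and the ratio $h(a_k)/h(a_{k-1}) \to \sqrt\theta$ — is routine bookkeeping.
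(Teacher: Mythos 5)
Your argument is correct, but it follows a genuinely different route from the paper's. The paper disposes of this corollary in one line: by Lemma~\ref{lem:Mu}, $\Mu^I_t-\Sigma^I_t=\ln I(t)-\Sigma^I_t/2$, so the statement is just the validity half of the sharper two-sided LIL already established in Corollary~\ref{cor:EP-LIL} via the Dubins--Schwarz reduction of Section~\ref{sec:EP}. The alternative proof the paper sketches (and carries out in full generality in Proposition~\ref{prop:CAPM-LIL}) is a pure mixture argument: form the \emph{single} test $I$-supermartingale $\sum_k w_k\exp\bigl(\epsilon_k(\Mu^I_t-\Sigma^I_t)-\tfrac{\epsilon_k^2}{2}\Sigma^I_t\bigr)$ with $\epsilon_k=(1+\kappa)^{-k}$ and summable weights $w_k=k^{-1-\delta}$, invoke the fact that a test $I$-supermartingale with finite initial value is $\UpProbI$-a.s.\ bounded, and then, for each $t$, read off the bound from the approximately optimal component $k$; no Borel--Cantelli step appears, and the parameters $\kappa,\delta$ are removed by a further mixture. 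You instead follow Ville's classical blocking scheme: geometric blocks $a_k=\theta^k$ in intrinsic time, a separate Ville inequality (via Lemma~\ref{lem:robustness}) on each block with optimized $(\epsilon_k,\lambda_k)$, and then a game-theoretic first Borel--Cantelli lemma, which you rightly identify as the crux and which you build by freezing each $Y^{(\epsilon_k)}/\lambda_k$ at level $1$ (reusing the construction from the proof of Lemma~\ref{lem:robustness}) and summing; this step is sound, since the frozen processes stay at $1$ once they reach it, so on $\limsup_k B_k$ the sum has $\liminf_{t\to\infty}$ equal to $\infty$ while its initial value $\sum_k\lambda_k^{-1}$ is finite. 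Comparing the two: the paper's mixture proof is shorter because the a.s.\ boundedness of one supermartingale replaces your per-block estimates and Borel--Cantelli, and it transfers verbatim to the CAPM setting where the Dubins--Schwarz route is unavailable; your proof isolates a reusable tool (Borel--Cantelli for $\UpProbI$ via freezing) and stays closer to the textbook LIL structure. Two small points to tidy up: (i) both your sum of frozen processes and the paper's mixture tacitly use that countable nonnegative combinations of test $I$-supermartingales are again test $I$-supermartingales, which requires a routine transfinite induction (closure under finite sums, plus increasing limits being $\liminf$s) --- the paper glosses over the same point, so this is not a gap relative to its standard of rigor; (ii) the freezing construction guarantees $X^*_t=1$ only when $\sup_{s\le t}X_s>1$ strictly, so define $B_k$ with strict inequality or shrink $\lambda_k$ slightly.
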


\begin{proof}
  This is part of Corollary~\ref{cor:EP-LIL}
  (combined with Lemma~\ref{lem:Mu}),
  so there is nothing to prove.
  But alternatively, we could mix the processes
  (which quasi-always coincide with test $I$-supermartingales:
  cf.\ Lemma~\ref{lem:supermartingale-1})
  \begin{equation*}
    \exp
    \left(
      -\epsilon \Mu^I_t
      +\epsilon \Sigma^I_t
      - \frac{\epsilon^2}{2} \Sigma^I_t
    \right)
  \end{equation*}
  and
  \begin{equation*}
    \exp
    \left(
      \epsilon \Mu^I_t
      -\epsilon \Sigma^I_t
      - \frac{\epsilon^2}{2} \Sigma^I_t
    \right)
  \end{equation*}
  over $\epsilon>0$ of the form $(1+\kappa)^{-k}$, $k=1,2,\ldots$,
  with weights $w_k=k^{-1-\delta}$ (so that $w_k\to0$ slowly while still $\sum_kw_k<\infty$)
  for small $\kappa>0$ and $\delta>0$.
  For details, see Section~\ref{sec:CAPM},
  where we will prove a more general statement.
\end{proof}

Intuitively, our new method still allows us to derive the upper LIL
since $X(\delta/2)\sim\eta(\delta/2)$ as $\delta\to0$,
and the LIL is about almost sure convergence and so corresponds to small $\delta$.

\section{Existence of some basic quantities (2)}
\label{sec:existence-2}

This section continues the series of definitions that we started in Section~\ref{sec:existence-1}.
Now we consider another traded security $S$.
Since it is a traded security, we can define $\Sigma^S$ and $\Mu^S$
analogously to $\Sigma^I$ and $\Mu^I$;
however, this would involve stopping times defined in terms of $S$ rather than $I$.
It is more convenient to have just one family of stopping times.
Therefore, we now define the sequence of stopping times $T^n_k$, $k=0,1,2,\ldots$, inductively
by $T^n_{0}(\omega):=0$ and
\begin{equation}\label{eq:T-2}
  T^n_k(\omega)
  :=
  \inf
  \Bigl\{
    t>T^n_{k-1}(\omega)
    \st
    \left|I(t)-I(T^n_{k-1})\right| \vee \left|S(t)-S(T^n_{k-1})\right| = 2^{-n}
  \Bigr\}
\end{equation}
for $k=1,2,\ldots$\,.
We let $T^n(\omega)$ stand for the \emph{$n$th partition}, i.e., the set
\[
  T^n(\omega)
  :=
  \left\{
    T^n_k(\omega)
    \st
    k=0,1,\ldots
  \right\};
\]
under our new definition, the partitions are not necessarily nested,
$T^1\subseteq T^2\subseteq\cdots$
(as was the case for our old definition~\eqref{eq:T-1}).

The following lemma says that we can redefine $\Sigma^{I,n}_t$ as \eqref{eq:Sigma} using the new partitions.

\begin{lemma}\label{lem:Sigma-2}
  The limit in~\eqref{eq:Sigma} exists and is uniform on compact intervals quasi-always
  for both partitions~\eqref{eq:T-1} and~\eqref{eq:T-2};
  the limits coincide quasi-always.
\end{lemma}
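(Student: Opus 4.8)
The plan is to route everything through the cumulative relative growth $\Mu^{I,n}$ rather than through $\Sigma^{I,n}$ directly, exploiting the fact that $\Mu^{I,n}$ is literally a simple capital process. The Taylor expansion in the proof of Lemma~\ref{lem:Mu} is partition-agnostic: for either partition \eqref{eq:T-1} or \eqref{eq:T-2} it gives, with $m_k$ as in \eqref{eq:m},
\[
  \Sigma^{I,n}_t
  =
  2\Mu^{I,n}_t
  -
  2\ln I(t)
  +
  O\Bigl(\sum_k \left|m_k\right|^3\Bigr),
\]
and the remainder is $o(1)$ uniformly on compacts q.a., since $\max_k\left|m_k\right|\le 2^{-n}/\inf_{[0,T]}I\to0$ while $\Sigma^{I,n}$ stays bounded on $[0,T]$ (the variation index of $I$ does not exceed $2$ q.a., \cite{GTP28}). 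Because $\ln I(t)$ depends neither on $n$ nor on the partition, convergence of $\Sigma^{I,n}$ is equivalent to convergence of $\Mu^{I,n}$, and two partitions yield the same $\Sigma^I$ exactly when they yield the same $\lim_n\Mu^{I,n}$. Thus all three assertions of the lemma reduce to a single statement about the martingale $\Mu^{I,n}$.

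The key observation is that $\Mu^{I,n}_t$ is the capital at time $t$ of the simple strategy holding $1/I(T^n_{k-1})$ units of $I$ on each interval $[T^n_{k-1},T^n_k)$, so for any two partition schemes the difference of their $\Mu^{I,n}$ processes is again a simple capital process starting at $0$. The decisive quantitative input is that, at any $t\in[0,T]$, the held position $1/I(T^n_{k-1})$ differs from $1/I(t)$ by at most $2^{-n}/(\inf_{[0,T]}I)^2$, because both partitions stop as soon as $I$ has moved by $2^{-n}$ (for \eqref{eq:T-2} possibly sooner, when $S$ moves first). Hence for old-versus-new at the same level, or old-versus-old (resp.\ new-versus-new) at consecutive levels, the difference strategy holds a position bounded in absolute value by $O(2^{-n})$. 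Writing this difference as a single simple capital process $D^n$ on the common refinement of the two grids, its game-theoretic quadratic variation $\langle D^n\rangle_t:=\sum(\text{position})^2(\Delta I)^2$ satisfies $\langle D^n\rangle_T=O(2^{-2n})\sum(\Delta I)^2=O(2^{-2n})$ q.a.

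To conclude $\sup_{t\le T}\left|D^n_t\right|\to0$ q.a.\ I would use the game-theoretic maximal inequality. Since $(D^n)^2-\langle D^n\rangle$ is itself a simple capital process (its increments are $2D^n_{k-1}\Delta D^n_k$), the process $c_n+(D^n_t)^2-\langle D^n\rangle_t$, where $c_n=O(2^{-2n})$ is a deterministic (localized) bound on $\langle D^n\rangle_T$, is a nonnegative simple capital process on $[0,T]$ starting at $c_n$ and dominating $(D^n_t)^2$; the maximal inequality, immediate from \eqref{eq:upper-probability} together with Lemma~\ref{lem:robustness}, then gives $\UpProb(\sup_{t\le T}\left|D^n_t\right|\ge\eta)=O(2^{-2n})/\eta^2$. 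These bounds are summable in $n$, so the game-theoretic Borel--Cantelli lemma yields $\sup_{t\le T}\left|D^n_t\right|\to0$ q.a., and combining the resulting q.a.\ statements over $T\in\{1,2,\dots\}$ (by averaging the associated test supermartingales) gives the uniform-on-compacts conclusion. Applied to old-versus-new at level $n$, this shows $\Sigma^{I,n}_{\text{new}}-\Sigma^{I,n}_{\text{old}}=2D^n+o(1)\to0$, giving both existence for \eqref{eq:T-2} and coincidence of the limits; applied to consecutive levels it re-proves existence for each partition independently of \cite{GTP28}.

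I expect the main obstacle to be the bookkeeping that makes the maximal-inequality step rigorous, rather than any conceptual difficulty. The bounds on the position and on $\langle D^n\rangle_T$ both carry the $\omega$-dependent factors $\inf_{[0,T]}I$ and $\sum(\Delta I)^2$, so one must localize---stopping when $\inf_{[0,T]}I$ drops below $1/m$ or when the running quadratic variation of $I$ exceeds $m$---apply the inequality on each localized event with a deterministic $c_n$, and then let $m\to\infty$. One must also assemble $D^n$ as a genuine simple strategy on the interleaved grid $T^{n}_{\mathrm{old}}\cup T^{n}_{\mathrm{new}}$, so that its position is $\tau$-measurable and bounded, and check that the dominating supermartingale is nonnegative on $[0,T]$, which holds because there the localized $\langle D^n\rangle_t$ is dominated by $c_n$.
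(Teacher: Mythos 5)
Your proposal is essentially correct, but it takes a genuinely different route from the paper's proof. The paper disposes of the lemma almost entirely by citation: it invokes \cite{GTP42} for the fact that the \emph{non-relative} sums $\Alpha^{I,n}_t=\sum_k\bigl(I(T^n_k\wedge t)-I(T^n_{k-1}\wedge t)\bigr)^2$ converge uniformly on compacts q.a., obtains coincidence for the two partition schemes by applying the argument of Section~5 of \cite{GTP42} to the $n$th partitions of \eqref{eq:T-1} and \eqref{eq:T-2} (rather than to the $(n-1)$th and $n$th partitions of a single scheme), and then recovers the relative quantity as the Riemann--Stieltjes integral $\Sigma^I_t=\int_0^t\dd\Alpha^I_s/I^2(s)$, to which the uniform convergence carries over. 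You instead handle the denominators by absorbing them into trading positions: you reduce $\Sigma^{I,n}$ to $\Mu^{I,n}$ via the Taylor identity (running the paper's Lemma~\ref{lem:Mu} in reverse), note that $\Mu^{I,n}$ is literally a simple capital process, and prove Cauchyness of the $\Mu^{I,n}$ by an explicit supermartingale construction: the discrete It\^o identity making $(D^n)^2-\langle D^n\rangle$ a simple capital process, a maximal inequality, localization, and game-theoretic Borel--Cantelli. What your route buys is self-containedness (it re-derives what the paper outsources to \cite{GTP42}, and indeed is close in spirit to how \cite{GTP42} itself argues), explicit rates, and an explicit witnessing test supermartingale; what the paper's route buys is brevity and the clean separation of the ``hard'' pathwise existence result from the easy denominator bookkeeping.

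Two of your quantitative claims need repair, though both are fixable with the variation-index fact you already cite and do not affect the conclusion. First, variation index $\le2$ does \emph{not} give boundedness of $\Sigma^{I,n}$ on $[0,T]$ uniformly in $n$: finite $p$-variation for $p>2$ only yields $\Sigma^{I,n}=O(2^{(p-2)n})$. The remainder control you actually need, $\sum_k\left|m_k\right|^3\to0$, should instead be argued directly via $\sum_k\left|m_k\right|^3\le(\max_k\left|m_k\right|)^{3-p}\sum_k\left|m_k\right|^p$ with $p\in(2,3)$, exactly as the paper does in the proof of Lemma~\ref{lem:Mu} and of its Lemma on the process \eqref{eq:supermartingale-1}. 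Second, the same issue affects your bound $\langle D^n\rangle_T=O(2^{-2n})\sum(\Delta I)^2=O(2^{-2n})$: the factor $\sum(\Delta I)^2$ over the level-$n$ combined grid is not $O(1)$ by the variation index alone, only $O(2^{(p-2)n})$, so after localization you get $\langle D^n\rangle_T=O(2^{-(4-p)n})$. This is still summable in $n$, so your maximal-inequality and Borel--Cantelli steps survive, but as literally written the $O(2^{-2n})$ claim is unjustified (it would follow from the convergence of $\Alpha^{I,n}$, i.e., from the very result of \cite{GTP42} your argument is meant to avoid).
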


\begin{proof}
  As shown in \cite{GTP42}, the limit of \eqref{eq:Sigma} as $n\to\infty$
  exists and is uniform on compact intervals quasi-always if we ignore the denominator;
  namely, the sequence
  \begin{equation*}
    \Alpha^{I,n}_t(\omega)
    :=
    \sum_{k=1}^{\infty}
    \left(
      I(T^n_k\wedge t)
      -
      I(T^n_{k-1}\wedge t)
    \right)^2,
    \quad
    n=1,2,\ldots,
  \end{equation*}
  converges to a function $\Alpha^I$ uniformly on compact intervals quasi-always.
  The function $\Alpha^I$ is the same (quasi-always) for the sequences of partitions~\eqref{eq:T-1} and~\eqref{eq:T-2}:
  to check this, apply the argument given in Section~5 of \cite{GTP42} to the $n$th partitions
  in~\eqref{eq:T-1} and~\eqref{eq:T-2} rather than to the $(n-1)$th and $n$th partitions in the same sequence.
  It is clear that $\Sigma^I$ is the Riemann--Stiltjes integral
  \[
    \Sigma^I_t
    =
    \int_0^t
    \frac{\dd\Alpha^I_s}{I^2(s)}
  \]
  and that the statement of uniform convergence carries over to $\Sigma^{I,n}$.
\end{proof}

Next we state the analogue of Lemma~\ref{lem:Sigma-2} for $\Mu^I$.

\begin{lemma}\label{lem:Mu-2}
  The limit in~\eqref{eq:Mu} exists and is uniform on compact intervals quasi-always
  for both partitions~\eqref{eq:T-1} and~\eqref{eq:T-2};
  the limits coincide quasi-always.
\end{lemma}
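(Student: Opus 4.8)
The plan is to reuse the exact algebraic identity underlying the proof of Lemma~\ref{lem:Mu}, observing that it is purely deterministic and therefore holds verbatim for both families of stopping times \eqref{eq:T-1} and \eqref{eq:T-2}. Writing $m_k$ as in \eqref{eq:m}, the telescoping sum $\ln I(t)=\sum_k\ln(1+m_k)$ is valid for any partition (using $I(0)=1$), so Taylor's expansion $\ln(1+m)=m-\tfrac12 m^2+O(|m|^3)$ gives, for every $n$, $t$, and $\omega$,
\[
  \Mu^{I,n}_t
  =
  \ln I(t)+\tfrac12\Sigma^{I,n}_t+R^n_t,
  \qquad
  R^n_t:=\sum_{k}\Bigl(m_k-\tfrac12 m_k^2-\ln(1+m_k)\Bigr).
\]
This identity carries no probabilistic or quasi-always qualification; it is the same formula used in Lemma~\ref{lem:Mu}, only now read for an arbitrary partition.

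First I would control the remainder uniformly in $n$ and across the two partition types. For $|m|\le\tfrac12$ one has $|\,m-\tfrac12 m^2-\ln(1+m)\,|\le C|m|^3$, hence
\[
  |R^n_t|
  \le
  C\sum_k|m_k|^3
  \le
  C\Bigl(\max_k|m_k|\Bigr)\sum_k m_k^2
  =
  C\Bigl(\max_k|m_k|\Bigr)\Sigma^{I,n}_t.
\]
The key point is that for both \eqref{eq:T-1} and \eqref{eq:T-2} each completed increment of $I$ has absolute value at most $2^{-n}$, so $\max_k|m_k|\le 2^{-n}/\inf_{s\in[0,T]}I(s)$ on $[0,T]$; since $I$ is continuous and positive, $\inf_{s\in[0,T]}I(s)>0$ and this bound tends to $0$ as $n\to\infty$, uniformly over $t\in[0,T]$. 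By Lemma~\ref{lem:Sigma-2} the quantities $\Sigma^{I,n}$ converge uniformly on $[0,T]$ quasi-always (for both partitions), and in particular are bounded there; therefore $R^n_t\to0$ uniformly on $[0,T]$ quasi-always.

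Combining the first display with $\Sigma^{I,n}\to\Sigma^I$ (Lemma~\ref{lem:Sigma-2}) and $R^n\to0$ then yields $\Mu^{I,n}_t\to\ln I(t)+\tfrac12\Sigma^I_t$ uniformly on compact intervals quasi-always, for each of the two partitions. Since Lemma~\ref{lem:Sigma-2} also asserts that the two limits $\Sigma^I$ coincide quasi-always, the two limiting processes $\Mu^I$ agree quasi-always as well, which is exactly the claim.

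The only genuine work is the uniform remainder estimate of the second paragraph, and I expect the main subtlety there to be that under \eqref{eq:T-2} the partition contains extra points at which $S$ moves a full $2^{-n}$ while $I$ barely moves. But such points only make the corresponding $m_k$ smaller, so they cannot spoil either $\max_k|m_k|\to0$ or the bound $\sum_k m_k^2=\Sigma^{I,n}_t$; and the heavier task of showing that these extra points do not alter the limit of $\Sigma^{I,n}$ has already been discharged in Lemma~\ref{lem:Sigma-2}, on which the present argument rests.
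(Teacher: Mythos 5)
Your proposal is correct, but it takes a genuinely different route from the paper's. The paper proves Lemma~\ref{lem:Mu-2} by appealing to the external pathwise It\^o-integral construction (Section~4 of \cite{GTP42}, together with Perkowski--Pr\"omel): existence of the limit comes from identifying it as $\Mu^I_t=\int_0^t \dd I(s)/I(s)$, and the coincidence of the limits for \eqref{eq:T-1} and \eqref{eq:T-2} comes from rerunning that paper's comparison argument on the two partition families rather than on consecutive partitions of one family. You instead reduce everything to Lemma~\ref{lem:Sigma-2} via the deterministic Taylor identity already used in the proof of Lemma~\ref{lem:Mu}, namely $\Mu^{I,n}_t=\ln I(t)+\tfrac12\Sigma^{I,n}_t+R^n_t$ with $|R^n_t|\le C\bigl(\max_k|m_k|\bigr)\Sigma^{I,n}_t$, and your bound $\max_k|m_k|\le 2^{-n}/\inf_{[0,T]}I$ is valid for both partitions (under \eqref{eq:T-2} the increments of $I$ are only smaller). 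This handles existence, uniformity on compacts, and coincidence in one stroke, makes the section self-contained given Lemma~\ref{lem:Sigma-2} (no second appeal to \cite{GTP42}), and incidentally extends Lemma~\ref{lem:Mu} to the new partitions; what the paper's route buys in exchange is the identification of $\Mu^I$ as the pathwise It\^o integral, which is of independent interest and is in fact quoted in its proof. One small point worth making explicit in your write-up: your argument is pathwise on all of $\Omega$ (positivity and continuity are built into the sample space), so the implication ``Lemma~\ref{lem:Sigma-2}'s property holds up to time $t$ $\Rightarrow$ your property holds up to time $t$'' shows that the very test supermartingale witnessing Lemma~\ref{lem:Sigma-2} also satisfies \eqref{eq:rich} for your time-dependent property, which is exactly what ``quasi-always'' requires.
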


\begin{proof}
  The existence of the limit quasi-always is shown in \cite{GTP42}, Section~4
  (and the earlier work \cite{Perkowski/Promel:2016} by Perkowski and Pr\"omel);
  the limit is nothing else than the It\^o integral
  \[
    \Mu^I_t
    =
    \int_0^t
    \frac{\dd I(s)}{I(s)}.
  \]
  The coincidence of the functions $\Mu^I$ quasi-always
  for the sequences of partitions~\eqref{eq:T-1} and~\eqref{eq:T-2}
  follows from the argument given in Section~4 of \cite{GTP42}
  applied to the $n$th partitions in~\eqref{eq:T-1} and~\eqref{eq:T-2}
  rather than to the $(n-1)$th and $n$th partitions in the same sequence.
\end{proof}

Using $S$ in place of $I$,
we obtain the definitions of $\Sigma^S$ and $\Mu^S$.
The analogues of Lemmas~\ref{lem:Sigma-2} and~\ref{lem:Mu-2} still hold.

As we are also interested in the covariance between (the returns of) $S$ and $I$, we define
\begin{multline}\label{eq:Sigma-ST}
  \Sigma^{S,I,n}_t(\omega)
  :=
  \sum_{k=1}^{\infty}
  \frac
  {
    S(T^n_k\wedge t)
    -
    S(T^n_{k-1}\wedge t)
  }
  {
    S(T^n_{k-1}\wedge t)
  }
  \frac
  {
    I(T^n_k\wedge t)
    -
    I(T^n_{k-1}\wedge t)
  }
  {
    I(T^n_{k-1}\wedge t)
  },\\
  n=1,2,\ldots,
\end{multline}
and then set $\Sigma^{S,I}$ to the limit as $n\to\infty$.
The existence of the limit quasi-always is asserted in our next lemma.

\begin{lemma}
  The limit of~\eqref{eq:Sigma-ST} as $n\to\infty$ exists quasi-always
  uniformly on compact intervals.
\end{lemma}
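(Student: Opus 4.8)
The plan is to treat the cross term by polarization, reducing it to quadratic variations of genuinely traded securities, and then to convert back from the absolute scale to the relative scale by a Riemann--Stieltjes integral, exactly as in the proof of Lemma~\ref{lem:Sigma-2}. Abbreviate $S_k:=S(T^n_k\wedge t)$ and $I_k:=I(T^n_k\wedge t)$, and introduce the absolute cross variation
\begin{equation*}
  \Alpha^{S,I,n}_t
  :=
  \sum_{k=1}^{\infty}(S_k-S_{k-1})(I_k-I_{k-1}).
\end{equation*}
Setting $P:=S+I$ and $P_k:=P(T^n_k\wedge t)=S_k+I_k$, the polarization identity $2(S_k-S_{k-1})(I_k-I_{k-1})=(P_k-P_{k-1})^2-(S_k-S_{k-1})^2-(I_k-I_{k-1})^2$ yields
\begin{equation*}
  \Alpha^{S,I,n}_t
  =
  \tfrac12\bigl(\Alpha^{P,n}_t-\Alpha^{S,n}_t-\Alpha^{I,n}_t\bigr),
\end{equation*}
where $\Alpha^{P,n}$, $\Alpha^{S,n}$, $\Alpha^{I,n}$ denote the sums of squared increments of $P$, $S$, $I$ along the partition~\eqref{eq:T-2}.

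First I would establish that each of the three summands on the right converges uniformly on compact intervals quasi-always. For $\Alpha^{S,n}$ and $\Alpha^{I,n}$ this is exactly the content used in the proof of Lemma~\ref{lem:Sigma-2}, which invokes \cite{GTP42}. For $\Alpha^{P,n}$ I would apply the same result of \cite{GTP42} to $P=S+I$: since $S$ and $I$ are positive and continuous, so is $P$, and $P$ is traded, being the buy-and-hold portfolio of one unit of $S$ and one unit of $I$ (so any simple strategy in $P$ is realized in our framework by the holdings $h_n=(\phi_n,\phi_n)$). The partition~\eqref{eq:T-2} is not the natural partition of $P$, but each increment of $P$ is bounded by $2^{-n+1}$, so its mesh in $P$ tends to $0$, and the partition-independence argument from Section~5 of \cite{GTP42}---used just as in Lemma~\ref{lem:Sigma-2}---identifies the limit $\Alpha^P$ quasi-always. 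Since a finite intersection of properties holding quasi-always again holds quasi-always (combine the witnessing test supermartingales as $\tfrac13(X^{(1)}+X^{(2)}+X^{(3)})$), it follows that $\Alpha^{S,I,n}$ converges uniformly on compacts quasi-always to the continuous bounded-variation function $\Alpha^{S,I}:=\tfrac12(\Alpha^P-\Alpha^S-\Alpha^I)$.

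It remains to pass from the absolute cross variation back to the relative one. Recognizing \eqref{eq:Sigma-ST} as
\begin{equation*}
  \Sigma^{S,I,n}_t
  =
  \sum_{k=1}^{\infty}
  \frac{(S_k-S_{k-1})(I_k-I_{k-1})}{S_{k-1}I_{k-1}},
\end{equation*}
a Riemann--Stieltjes sum, I would show, pathwise for each $\omega$ at which the three absolute variations converge, that $\Sigma^{S,I}_t=\int_0^t \dd\Alpha^{S,I}_s/(S(s)I(s))$ uniformly on compacts, in the same spirit as $\Sigma^I=\int_0^t \dd\Alpha^I_s/I^2(s)$ in Lemma~\ref{lem:Sigma-2}. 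This last step is where I expect the main work. Unlike there, the integrator $\Alpha^{S,I}$ is not monotone but only of bounded variation (a difference of two increasing functions), so the convergence of the Riemann--Stieltjes sums needs more care: I would use that the approximating integrators $\Alpha^{S,I,n}$ have total variation uniformly bounded, by $(\Alpha^{S,n}_t)^{1/2}(\Alpha^{I,n}_t)^{1/2}$ via Cauchy--Schwarz, that the step-function approximations of the continuous integrand $1/(SI)$ converge uniformly on compacts as the mesh tends to $0$, and then invoke the standard fact that $\int f_n\,\dd g_n\to\int f\,\dd g$ when $f_n\to f$ and $g_n\to g$ uniformly with $\sup_n\mathrm{TV}(g_n)<\infty$. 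Everything here is deterministic once the $\Alpha$'s are controlled quasi-always, so the quasi-always qualifier and its witnessing supermartingale carry over unchanged.
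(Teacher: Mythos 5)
Your proposal is correct and follows essentially the same route as the paper's proof: the polarization identity $\Alpha^{S,I,n}=\tfrac12(\Alpha^{S+I,n}-\Alpha^{S,n}-\Alpha^{I,n})$ reduces the cross-variation to quadratic variations handled as in Lemma~\ref{lem:Sigma-2}, and then \eqref{eq:Sigma-ST} is recognized as Riemann--Stieltjes approximating sums for $\int_0^t \dd\Alpha^{S,I}_s/(S(s)I(s))$, with the bounded-variation property of $\Alpha^{S,I}$ making that integral meaningful. Your write-up supplies details the paper leaves implicit (the applicability of \cite{GTP42} to the traded portfolio $S+I$, the combination of quasi-always witnesses, and the uniform total-variation bound via Cauchy--Schwarz), but the underlying argument is the same.
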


\begin{proof}
  First we notice that, if the denominators in~\eqref{eq:Sigma-ST} are ignored,
  the limit $\Alpha^{S,I}_t(\omega)$ of
  \begin{multline*}
    \Alpha^{S,I,n}_t(\omega)
    :=
    \sum_{k=1}^{\infty}
    \left(
      S(T^n_k\wedge t)
      -
      S(T^n_{k-1}\wedge t)
    \right)
    \left(
      I(T^n_k\wedge t)
      -
      I(T^n_{k-1}\wedge t)
    \right)\\
    =
    \frac12
    \left(
      \Alpha^{S+I,n}_t
      -
      \Alpha^{S,n}_t
      -
      \Alpha^{I,n}_t
    \right)
  \end{multline*}
  will exist quasi-always uniformly on compact intervals;
  moreover, it will have bounded variation on compact intervals q.a.
  It remains to notice that \eqref{eq:Sigma-ST} are approximating sums for the Riemann--Stiltjes integral
  \[
    \Sigma^{S,I}_t
    =
    \int_0^t
    \frac{\dd\Alpha^{S,I}_s}{S(s)I(s)}.
    \qedhere
  \]
\end{proof}

The last quantity that we will need quantifies the difference between the returns of $S$ and $I$:
\begin{multline}\label{eq:Delta-ST}
  \Delta^{S,I,n}_t(\omega)
  :=
  \sum_{k=1}^{\infty}
  \left(
    \frac
    {
      S(T^n_k\wedge t)
      -
      S(T^n_{k-1}\wedge t)
    }
    {
      S(T^n_{k-1}\wedge t)
    }
    -
    \frac
    {
      I(T^n_k\wedge t)
      -
      I(T^n_{k-1}\wedge t)
    }
    {
      I(T^n_{k-1}\wedge t)
    }
  \right)^2,\\
  n=1,2,\ldots;
\end{multline}
we then set $\Delta^{S,I}_t$ to the limit as $n\to\infty$.
The final lemma of this section shows that the limit as $n\to\infty$ exists quasi-always and is closely related
to the other quantities introduced in this section.

\begin{lemma}\label{lem:VSI}
  The limit $\Delta^{S,I}_t$ of~\eqref{eq:Delta-ST} as $n\to\infty$
  exists uniformly on compact intervals quasi-always
  and satisfies
  \[
    \Delta^{S,I}_t
    =
    \Sigma^{S}_t
    +
    \Sigma^{I}_t
    -
    2 \Sigma^{S,I}_t
    \quad
    \text{q.a.}
  \]
\end{lemma}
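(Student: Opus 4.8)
The plan is to reduce the lemma to the elementary algebraic identity obtained by expanding the square in~\eqref{eq:Delta-ST}, and then to pass to the limit using convergences that are already available. Abbreviating the two relative increments that appear in~\eqref{eq:Delta-ST} by
\[
  a_k := \frac{S(T^n_k\wedge t)-S(T^n_{k-1}\wedge t)}{S(T^n_{k-1}\wedge t)}
  \quad\text{and}\quad
  b_k := \frac{I(T^n_k\wedge t)-I(T^n_{k-1}\wedge t)}{I(T^n_{k-1}\wedge t)},
\]
the general summand of~\eqref{eq:Delta-ST} is $(a_k-b_k)^2 = a_k^2 - 2a_kb_k + b_k^2$. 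Summing over $k$ and recognizing the three resulting series as the defining sums of $\Sigma^{S,n}_t$, of $\Sigma^{S,I,n}_t$, and of $\Sigma^{I,n}_t$ (the first from~\eqref{eq:Sigma} with $S$ in place of $I$, the second from~\eqref{eq:Sigma-ST}, the third from~\eqref{eq:Sigma}), I would obtain the exact finite-$n$ identity
\[
  \Delta^{S,I,n}_t = \Sigma^{S,n}_t - 2\,\Sigma^{S,I,n}_t + \Sigma^{I,n}_t,
\]
valid at every $t$ at which the three component sums are finite.

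Next I would let $n\to\infty$. Each of the three pieces on the right is already known to converge uniformly on compact intervals quasi-always: $\Sigma^{I,n}\to\Sigma^I$ by Lemma~\ref{lem:Sigma-2}, $\Sigma^{S,n}\to\Sigma^S$ by the stated analogue of Lemma~\ref{lem:Sigma-2} for $S$, and $\Sigma^{S,I,n}\to\Sigma^{S,I}$ by the preceding lemma. On the q.a.\ set where all three limits exist the component sums are in particular finite for all large $n$, so the displayed identity holds there for all large $n$; and since a fixed finite linear combination of uniformly convergent sequences converges uniformly to the corresponding combination of the limits, $\Delta^{S,I,n}_t$ converges uniformly on compact intervals, quasi-always, to $\Sigma^S_t + \Sigma^I_t - 2\,\Sigma^{S,I}_t$. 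This is exactly the asserted limit and the asserted formula.

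The only point requiring care is the quasi-always bookkeeping: the three convergences are witnessed by three separate test supermartingales of unit initial value, and I must run all three simultaneously. I would invoke the standard fact that a finite intersection of q.a.\ properties is again q.a., realized by averaging the three witnessing test supermartingales: the average again starts at $1$ and becomes infinite as soon as any one of the three underlying properties is violated. I do not expect any genuine obstacle here. The substance of the lemma is the \emph{exact} termwise identity $(a_k-b_k)^2 = a_k^2 - 2a_kb_k + b_k^2$; once it is summed and the three reference convergences are invoked, the limiting step is routine.
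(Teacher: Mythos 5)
Your proposal is correct and is essentially identical to the paper's proof: the paper likewise expands $(s_k-m_k)^2$ termwise, identifies the three resulting series with the defining sums of $\Sigma^{S,n}_t$, $\Sigma^{I,n}_t$, and $\Sigma^{S,I,n}_t$, and passes to the limit $n\to\infty$ using the convergences established earlier in the section. Your additional remark about combining the three quasi-always witnesses by averaging their test supermartingales is a detail the paper leaves implicit, and it is handled exactly as you describe.
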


\begin{proof}
  It suffices to notice that, using the notation $m_k$ and $s_k$ for the returns of $S$ and $I$
  ($m_k$ is defined by \eqref{eq:m} and $s_k$ is defined in the same way using $S$ in place of $I$),
  \[
    \Delta^{S,I,n}_t
    =
    \sum_{k=1}^{\infty}
    (s_k-m_k)^2
    =
    \sum_{k=1}^{\infty}
    s_k^2
    +
    \sum_{k=1}^{\infty}
    m_k^2
    -
    2
    \sum_{k=1}^{\infty}
    s_k m_k,
  \]
  and pass to the limit as $n\to\infty$.
\end{proof}

\section{Capital Asset Pricing Model}
\label{sec:CAPM}

Let us use the same mixing method as in Section~\ref{sec:EPM},
except that now we will apply it to $I$ and $S$ rather than to $I$ and cash.

\begin{lemma}\label{lem:supermartingale-2}
  For each $\epsilon\in\bbbr$, the process
  \begin{equation}\label{eq:supermartingale-2}
    \exp
    \left(
      \epsilon (\Mu^S_t - \Mu^I_t + \Sigma^I_t - \Sigma^{S,I}_t)
      -
      \frac{\epsilon^2}{2} \Delta^{S,I}_t
    \right)
  \end{equation}
  is a test $I$-supermartingale q.a.
\end{lemma}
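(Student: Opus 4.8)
The plan is to repeat the proof of Lemma~\ref{lem:supermartingale-1} essentially verbatim, replacing the mixture of $I$ and cash by a mixture of $I$ and $S$. Writing $m_k$ (as in~\eqref{eq:m}) and $s_k$ for the returns of $I$ and $S$ over the $k$th interval of the $n$th partition, I would consider the simple capital process whose value at $T^n_K$ is
\[
  \prod_{k=1}^K\bigl(1+(1-\epsilon)m_k+\epsilon s_k\bigr)
\]
(stopped as soon as it hits $0$, which keeps it nonnegative because $I$ and $S$ are continuous). This process reinvests its whole capital at the blended return $(1-\epsilon)m_k+\epsilon s_k=m_k+\epsilon(s_k-m_k)$ over each interval; it is a genuine mixture of $I$ and $S$ (convex when $\epsilon\in[0,1]$) and reduces to holding $I$ when $\epsilon=0$. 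Dividing by $\prod_{k=1}^K(1+m_k)$, the value of $I$ at $T^n_K$, turns it into a class~$0$ test $I$-supermartingale.

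First I would take the logarithm of this ratio and expand each factor by $\ln(1+x)=x-\tfrac12x^2+O(\left|x\right|^3)$. With $r_k:=(1-\epsilon)m_k+\epsilon s_k$, the key identities are $r_k-m_k=\epsilon(s_k-m_k)$ and $r_k^2-m_k^2=2\epsilon m_k(s_k-m_k)+\epsilon^2(s_k-m_k)^2$; substituting into $\ln(1+r_k)-\ln(1+m_k)=(r_k-m_k)-\tfrac12(r_k^2-m_k^2)+O(\cdots)$ and summing gives
\[
  \ln\prod_{k=1}^K\frac{1+r_k}{1+m_k}
  =
  \epsilon\sum_k(s_k-m_k)
  -\epsilon\sum_k m_k(s_k-m_k)
  -\frac{\epsilon^2}{2}\sum_k(s_k-m_k)^2
  +O\Bigl(\sum_k(\left|m_k\right|+\left|s_k\right|)^3\Bigr),
\]
the value at $T^n_K$ of the log of a class~$0$ test $I$-supermartingale.

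Next I would pass to the limit $n\to\infty$. By the definitions in Section~\ref{sec:existence-2} and Lemma~\ref{lem:VSI}, the three explicit sums converge uniformly on compact intervals, q.a., to $\Mu^S_t-\Mu^I_t$, to $\Sigma^{S,I}_t-\Sigma^I_t$ (using $\sum_k m_k(s_k-m_k)=\sum_k m_ks_k-\sum_k m_k^2$), and to $\Delta^{S,I}_t$, respectively. Collecting terms produces exactly the exponent
\[
  \epsilon(\Mu^S_t-\Mu^I_t+\Sigma^I_t-\Sigma^{S,I}_t)-\frac{\epsilon^2}{2}\Delta^{S,I}_t,
\]
so~\eqref{eq:supermartingale-2} is the log of a class~$1$ test $I$-supermartingale q.a.

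The hard part will be controlling the cubic remainder, which now carries mixed terms $\left|m_k\right|^a\left|s_k\right|^b$ with $a+b=3$ rather than pure powers of $\left|m_k\right|$. I expect to dispose of them exactly as in Lemma~\ref{lem:supermartingale-1}: bound each such sum by $\max_k(\left|m_k\right|\vee\left|s_k\right|)$ times $\sum_k(m_k^2+s_k^2)$, the latter staying bounded (by $\Sigma^I+\Sigma^S$, via the fact that the variation index of $I$ and of $S$ does not exceed $2$ q.a.\ \cite{GTP28}) while the former tends to $0$ as $n\to\infty$. A secondary point is to verify that the stopping device preserves nonnegativity uniformly in real $\epsilon$, just as before.
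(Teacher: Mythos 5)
Your proposal is correct and follows essentially the same route as the paper's own proof: the same mixture process $\prod_{k=1}^K\bigl(1+(1-\epsilon)m_k+\epsilon s_k\bigr)$ stopped at zero, divided by $I$ to get a class~$0$ test $I$-supermartingale, the same Taylor expansion of the log-ratio, and the same passage to the limit using the variation-index bound from \cite{GTP28}. The only (harmless) difference is bookkeeping: you keep the quadratic terms grouped as $\sum_k m_k(s_k-m_k)$ and $\sum_k(s_k-m_k)^2$, so $\Delta^{S,I}_t$ appears directly, whereas the paper expands everything into $\sum m_k^2$, $\sum s_k m_k$, $\sum s_k^2$ and recombines them into $\Delta^{S,I}_t$ at the end.
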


\begin{proof}
  The value of the index $I$ at time $T^n_K$ is
  $\prod_{k=1}^K (1+m_k)$,
  and the value of the security $S$ at time $T^n_K$ is
  $\prod_{k=1}^K (1+s_k)$,
  where as before we use $s_k$ for the analogue of $m_k$ for $S$.
  Let us consider the simple capital process whose value at time $T^n_K$ is
  $\prod_{k=1}^K (1+(1-\epsilon)m_k+\epsilon s_k)$
  (except that it is stopped if and when it hits $0$);
  it can be considered to be a mixture (convex mixture when $\epsilon\in[0,1]$) of $I$ and $S$.
  We can see that
  \[
    \ln \prod_{k=1}^K (1+(1-\epsilon)m_k+\epsilon s_k)
    -
    \ln \prod_{k=1}^K (1+m_k)
  \]
  is the log of a test $I$-supermartingale at time $T^n_K$ for all $K$,
  which implies the analogous statement for
  \begin{multline*}
    \epsilon \sum_{k=1}^K s_k
    -
    \epsilon \sum_{k=1}^K m_k
    +
    \epsilon \sum_{k=1}^K m_k^2
    -
    \epsilon \sum_{k=1}^K s_k m_k\\
    -
    \frac{\epsilon^2}{2} \sum_{k=1}^K m_k^2
    +
    \epsilon^2 \sum_{k=1}^K s_k m_k
    -
    \frac{\epsilon^2}{2} \sum_{k=1}^K s_k^2\\
    +
    O
    \left(
      \sum_{k=1}^K \left|s_k\right|^3
    \right)
    +
    O
    \left(
      \sum_{k=1}^K \left|m_k\right|^3
    \right).
  \end{multline*}
  Passing to the limit as $n\to\infty$,
  we obtain that
  \[
    \epsilon \Mu^S_t
    -
    \epsilon \Mu^I_t
    +
    \epsilon \Sigma^I_t
    -
    \epsilon \Sigma^{S,I}_t
    -
    \frac{\epsilon^2}{2} \Delta^{S,I}_t
  \]
  is the log of a test $I$-supermartingale q.a.
\end{proof}

\begin{corollary}\label{cor:1}
  For any $\epsilon>0$ and $\delta>0$,
  \[
    \LowProbI
    \left\{
      \forall t\in[0,\infty):
      \left|
        \Mu^S_t - \Mu^I_t + \Sigma^I_t - \Sigma^{S,I}_t
      \right|
      <
      \frac{1}{\epsilon}\ln\frac{2}{\delta}
      +
      \frac{\epsilon}{2} \Delta^{S,I}_t
    \right\}
    \ge
    1-\delta.
  \]
\end{corollary}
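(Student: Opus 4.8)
The plan is to obtain this corollary as a direct consequence of Lemma~\ref{lem:supermartingale-2} together with the robustness result of Lemma~\ref{lem:robustness}, following exactly the pattern already used to derive the corollary after Lemma~\ref{lem:supermartingale-1}. Fix $\epsilon>0$ and $\delta>0$. By Lemma~\ref{lem:supermartingale-2}, the process in~\eqref{eq:supermartingale-2} is a test $I$-supermartingale q.a.; since it starts at value $1$ (the exponent vanishes at $t=0$ because $\Mu^S_0=\Mu^I_0=\Sigma^I_0=\Sigma^{S,I}_0=\Delta^{S,I}_0=0$), I can rescale it by the factor $\delta/2$ to obtain a test $I$-supermartingale with initial value $\delta/2$.

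Next I would invoke Lemma~\ref{lem:robustness}, which lets me pass from the defining $\liminf_{t\to\infty}$ to $\sup_{t\in[0,\infty)}$ in the definition of $\UpProbI$. Applying the definition~\eqref{eq:upper-probability-I} to the rescaled supermartingale, the event that~\eqref{eq:supermartingale-2} ever reaches $2/\delta$ has $I$-upper probability at most $\delta/2$. Taking logarithms, this says that
\begin{equation*}
  \UpProbI
  \left\{
    \exists t:
    \epsilon \bigl(\Mu^S_t - \Mu^I_t + \Sigma^I_t - \Sigma^{S,I}_t\bigr)
    -
    \frac{\epsilon^2}{2}\Delta^{S,I}_t
    \ge
    \ln\frac{2}{\delta}
  \right\}
  \le
  \frac{\delta}{2}.
\end{equation*}
Dividing the inequality inside the braces by $\epsilon$ gives a one-sided bound on $\Mu^S_t - \Mu^I_t + \Sigma^I_t - \Sigma^{S,I}_t$ from above that holds with lower $I$-probability at least $1-\delta/2$ for all $t$ simultaneously.

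To get the two-sided bound involving the absolute value, I would run the same argument with $-\epsilon$ in place of $\epsilon$: the process~\eqref{eq:supermartingale-2} with $-\epsilon$ substituted is also a test $I$-supermartingale q.a.\ by Lemma~\ref{lem:supermartingale-2} (note the $\epsilon^2$ coefficient is unchanged), and it controls the deviation from below, again with exceptional $I$-probability at most $\delta/2$. A union bound over these two events then yields the simultaneous two-sided estimate
\begin{equation*}
  \left|
    \Mu^S_t - \Mu^I_t + \Sigma^I_t - \Sigma^{S,I}_t
  \right|
  <
  \frac{1}{\epsilon}\ln\frac{2}{\delta}
  +
  \frac{\epsilon}{2}\Delta^{S,I}_t
\end{equation*}
for all $t$, with lower $I$-probability at least $1-\delta$, which is the claim.

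I do not anticipate a serious obstacle, since the structure is identical to the proof already given for the Corollary following Lemma~\ref{lem:supermartingale-1}; the only points needing a little care are the bookkeeping of the factor of two in $2/\delta$ (so that the two one-sided exceptional sets each carry mass $\delta/2$ and combine to $\delta$) and the observation that the ``q.a.''\ qualifier in Lemma~\ref{lem:supermartingale-2} is harmless, because the event on which~\eqref{eq:supermartingale-2} fails to coincide with a genuine test $I$-supermartingale can itself be absorbed into an arbitrarily small $I$-upper-probability correction (one adds a fixed test $I$-supermartingale of small initial value that tends to infinity on that exceptional set). If one prefers to avoid even that small correction, the cleaner route is to absorb the q.a.\ exceptional set directly into the supremum via Lemma~\ref{lem:robustness}, exactly as in the earlier corollary, so that no separate estimate is required.
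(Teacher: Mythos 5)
Your proposal is correct and follows essentially the same route as the paper's own proof: invoke Lemma~\ref{lem:supermartingale-2}, use the maximal-inequality form of $\UpProbI$ from Lemma~\ref{lem:robustness} to bound the probability that the exponential supermartingale ever reaches $2/\delta$, take logarithms and divide by $\epsilon$, then repeat with $-\epsilon$ and combine the two $\delta/2$ exceptional sets. Your explicit handling of the ``q.a.''\ qualifier and of the rescaling by $\delta/2$ only spells out details the paper leaves implicit.
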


\begin{proof}
  For $\epsilon>0$, the fact that \eqref{eq:supermartingale-2} is a test $I$-supermartingale
  implies that
  \begin{equation}\label{eq:inequality}
    \forall t\in[0,\infty):
    \epsilon
    \left(
      \Mu^S_t - \Mu^I_t + \Sigma^I_t - \Sigma^{S,I}_t
    \right)
    <
    \ln\frac{2}{\delta}
    +
    \frac{\epsilon^2}{2} \Delta^{S,I}_t
  \end{equation}
  with lower $I$-probability at least $1-\delta/2$.
  It remains to divide both sides of the inequality in~\eqref{eq:inequality} by $\epsilon$
  and consider the same test $I$-supermartingale but with $-\epsilon$ in place of $\epsilon$
  (which should be stopped as soon as the capital hits 0).
\end{proof}

The following corollary is in the spirit of Corollaries~\ref{cor:2-sided} and~\ref{cor:EP-given-epsilon};
however, now we wait until $I$ and $S$ become sufficiently different.

\begin{corollary}\label{cor:2}
  If $\delta>0$, $\epsilon>0$, and $\tau_T:=\inf\{t\st \Delta^{S,I}_t\ge T\}$ for some constant $T>0$,
  \begin{equation}\label{eq:CAPM-given-epsilon}
    \LowProbI
    \left\{
      \left|
        \Mu^S_{\tau_T} - \Mu^I_{\tau_T} + \Sigma^I_{\tau_T} - \Sigma^{S,I}_{\tau_T}
      \right|
      <
      \frac{1}{\epsilon}\ln\frac{2}{\delta}
      +
      \frac{\epsilon}{2} T
    \right\}
    \ge
    1-\delta.
  \end{equation}
  Our convention is that the event in the curly braces in~\eqref{eq:CAPM-given-epsilon}
  happens when $\tau_T=\infty$
  (i.e., when the security essentially coincides with the index).
\end{corollary}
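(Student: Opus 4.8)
The plan is to obtain Corollary~\ref{cor:2} directly from Corollary~\ref{cor:1}, in the same way that Corollary~\ref{cor:EP-given-epsilon} is obtained from its predecessor in Section~\ref{sec:EPM}; the only change is that the role previously played by $\Sigma^I$ is now played by $\Delta^{S,I}$. Applying Corollary~\ref{cor:1} with the same $\epsilon$ and $\delta$, we have with lower $I$-probability at least $1-\delta$ that
\[
  \left|\Mu^S_t - \Mu^I_t + \Sigma^I_t - \Sigma^{S,I}_t\right|
  <
  \frac{1}{\epsilon}\ln\frac{2}{\delta} + \frac{\epsilon}{2}\Delta^{S,I}_t
\]
holds simultaneously for all $t\in[0,\infty)$; call this event $A$. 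The idea is simply to instantiate this uniform bound at the single time $t=\tau_T$.

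The key step is the evaluation of $\Delta^{S,I}_{\tau_T}$. By Lemma~\ref{lem:VSI} the process $\Delta^{S,I}$ is, quasi-always, the uniform-on-compacts limit of the nondecreasing processes $\Delta^{S,I,n}$, and is therefore continuous and nondecreasing with $\Delta^{S,I}_0=0$. Hence, whenever $\tau_T<\infty$, the first-passage definition $\tau_T=\inf\{t\st \Delta^{S,I}_t\ge T\}$ together with continuity forces $\Delta^{S,I}_{\tau_T}=T$. On $A$, substituting $t=\tau_T$ therefore collapses the right-hand side of the displayed bound to $\frac{1}{\epsilon}\ln\frac{2}{\delta}+\frac{\epsilon}{2}T$, which is exactly the event $B$ appearing in~\eqref{eq:CAPM-given-epsilon}; and when $\tau_T=\infty$ the stated convention declares $B$ to hold outright. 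Thus $A\subseteq B$ quasi-always, and the conclusion $\LowProbI(B)\ge1-\delta$ should follow.

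The only point requiring care---and the place I expect the main (though minor) difficulty to lie---is reconciling the quasi-always qualifiers with the probabilistic conclusion. The continuity and monotonicity of $\Delta^{S,I}$, and its limit representation, hold only off an exceptional set $N$ carrying a test $I$-supermartingale that tends to infinity, so that $\UpProbI(N)=0$. To convert $A\cap N^c\subseteq B$ into the desired bound I would use subadditivity of $\UpProbI$: from $B^c\subseteq A^c\cup N$ we get $\UpProbI(B^c)\le\UpProbI(A^c)+\UpProbI(N)=\UpProbI(A^c)$, whence $\LowProbI(B)\ge\LowProbI(A)\ge1-\delta$. Apart from this bookkeeping, no estimate beyond Corollary~\ref{cor:1} is needed.
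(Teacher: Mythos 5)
Your proof is correct and follows exactly the route the paper intends: the paper states Corollary~\ref{cor:2} without any explicit proof, treating it as an immediate specialization of Corollary~\ref{cor:1} at $t=\tau_T$, which is precisely your argument. The details you supply---continuity and monotonicity of $\Delta^{S,I}$ forcing $\Delta^{S,I}_{\tau_T}=T$ when $\tau_T<\infty$, the stated convention at $\tau_T=\infty$, and the subadditivity bookkeeping for the quasi-always exceptional set---only make explicit what the paper leaves implicit.
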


It is natural to optimize the $\epsilon$ in \eqref{eq:CAPM-given-epsilon} given $\delta$ and $T$,
as we did in \eqref{eq:min},
which gives us the following corollary.

\begin{corollary}
  If $\delta>0$ and $\tau_T:=\inf\{t\st \Delta^I_t\ge T\}$ for some constant $T>0$,
  \begin{equation}\label{eq:CAPM-optimized}
    \LowProbI
    \left\{
      \left|
        \Mu^S_{\tau_T} - \Mu^I_{\tau_T} + \Sigma^I_{\tau_T} - \Sigma^{S,I}_{\tau_T}
      \right|
      <
      \sqrt{2T\ln\frac{2}{\delta}}
    \right\}
    \ge
    1-\delta.
  \end{equation}
\end{corollary}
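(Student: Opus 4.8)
The plan is to obtain~\eqref{eq:CAPM-optimized} as an immediate specialization of Corollary~\ref{cor:2}, optimizing over the free parameter $\epsilon$ in exactly the way that~\eqref{eq:EP-tuned-epsilon} was obtained from Corollary~\ref{cor:EP-given-epsilon} through the elementary minimization~\eqref{eq:min}. The key observation is that the bound $\frac{1}{\epsilon}\ln\frac{2}{\delta}+\frac{\epsilon}{2}T$ on the right-hand side of the inequality inside the curly braces in~\eqref{eq:CAPM-given-epsilon} is a fixed number (it does not depend on the path $\omega$, since $T$ is a constant), whereas the quantity $\left|\Mu^S_{\tau_T}-\Mu^I_{\tau_T}+\Sigma^I_{\tau_T}-\Sigma^{S,I}_{\tau_T}\right|$ being compared against it is path-dependent. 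We are therefore free to insert whatever positive $\epsilon$ minimizes that number.

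First I would recall from~\eqref{eq:min} that $\frac{1}{\epsilon}\ln\frac{2}{\delta}+\frac{\epsilon}{2}T$ is minimized over $\epsilon>0$ at $\epsilon^{*}:=\sqrt{2\ln(2/\delta)/T}$, where it attains the value $\sqrt{2T\ln\frac{2}{\delta}}$ (by AM--GM, or equivalently the first-order condition). Applying Corollary~\ref{cor:2} with this single value $\epsilon=\epsilon^{*}$ turns the right-hand side of~\eqref{eq:CAPM-given-epsilon} into exactly $\sqrt{2T\ln\frac{2}{\delta}}$, so that the event appearing there coincides verbatim with the event in~\eqref{eq:CAPM-optimized}; hence its $I$-lower probability is at least $1-\delta$, which is the claim. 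In contrast to the law of the iterated logarithm in Corollary~\ref{cor:EP-LIL-weak}, no mixing over a whole sequence of $\epsilon$'s is needed: a single optimal $\epsilon$ suffices, because we want the inequality for one fixed pair $(\delta,T)$ rather than simultaneously across all scales.

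The remaining points are bookkeeping rather than substance, and I do not expect any genuine obstacle. I would check that $\epsilon^{*}$ is admissible, i.e.\ strictly positive; this holds as soon as $\ln(2/\delta)>0$, that is $\delta<2$, which covers the entire nontrivial range $\delta\in(0,1)$ (for $\delta$ outside this range the asserted lower bound $1-\delta$ is vacuous). Since $\epsilon^{*}$ depends only on the constants $\delta$ and $T$, it is a legitimate fixed choice of parameter in Corollary~\ref{cor:2}. Finally, the boundary case $\tau_T=\infty$ is absorbed by the convention already adopted in Corollary~\ref{cor:2}, and the identity $\Delta^{S,I}_{\tau_T}=T$ used implicitly when $\tau_T<\infty$ is guaranteed by the quasi-always continuity of $\Delta^{S,I}$ recorded in Lemma~\ref{lem:VSI}. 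The only real content of the corollary is thus the calculus fact~\eqref{eq:min}, all the probabilistic work having been done already in Corollary~\ref{cor:2}.
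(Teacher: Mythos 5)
Your proof is correct and is exactly the paper's argument: the paper obtains this corollary by optimizing $\epsilon$ in~\eqref{eq:CAPM-given-epsilon} via the minimization~\eqref{eq:min}, i.e.\ by plugging $\epsilon^*=\sqrt{2\ln(2/\delta)/T}$ into Corollary~\ref{cor:2}, just as you do. The additional bookkeeping you mention (admissibility of $\epsilon^*$, the $\tau_T=\infty$ convention) is harmless but not needed beyond what Corollary~\ref{cor:2} already provides.
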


We will now give a complete proof of the validity part of a LIL for the CAPM;
the following proposition generalizes Corollary~\ref{cor:EP-LIL-weak}
(we obtain the latter by taking cash as $S$).

\begin{proposition}\label{prop:CAPM-LIL}
  Almost surely w.r.\ to $\UpProbI$,
  \begin{equation}\label{eq:CAPM-LIL}
    \Delta^{S,I}_t\to\infty
    \Longrightarrow
    \limsup_{t\to\infty}
    \frac{\left|\Mu^S_t-\Mu^I_t+\Sigma^I_t-\Sigma^{S,I}_t\right|}{\sqrt{2\Delta^{S,I}_t\ln\ln \Delta^{S,I}_t}}
    \le
    1.
  \end{equation}
\end{proposition}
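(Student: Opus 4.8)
The plan is to prove the law of the iterated logarithm by the standard Ville-style mixing argument, applied to the test $I$-supermartingales exhibited in Lemma~\ref{lem:supermartingale-2}. The upper LIL asserts that $\limsup\le1$, which is the ``easy'' (validity) half of a LIL; it follows from a sequence of one-sided exponential bounds with carefully chosen parameters, combined via a Borel--Cantelli-type argument. Throughout, it suffices to work on the event $\{\Delta^{S,I}_t\to\infty\}$, and by symmetry (replacing $\epsilon$ by $-\epsilon$, as in the proof of Corollary~\ref{cor:1}) it is enough to bound the one-sided quantity
\[
  L_t := \frac{\Mu^S_t-\Mu^I_t+\Sigma^I_t-\Sigma^{S,I}_t}{\sqrt{2\Delta^{S,I}_t\ln\ln\Delta^{S,I}_t}}
\]
from above by $1$ almost surely, and then apply the same bound to $-L_t$.

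First I would fix small constants $\kappa>0$ and $\delta>0$ and, for $k=1,2,\ldots$, take $\epsilon_k:=(1+\kappa)^{-k}$ together with weights $w_k:=k^{-1-\delta}$, normalized so that $\sum_k w_k=1$. The mixture
\[
  Y_t := \sum_{k=1}^{\infty}
  w_k
  \exp
  \left(
    \epsilon_k(\Mu^S_t-\Mu^I_t+\Sigma^I_t-\Sigma^{S,I}_t)
    -
    \frac{\epsilon_k^2}{2}\Delta^{S,I}_t
  \right)
\]
is again a test $I$-supermartingale q.a., being a countable convex combination of the processes of Lemma~\ref{lem:supermartingale-2} (the class of test $I$-supermartingales is closed under such mixtures: write the partial sums as simple capital processes and pass to the $\liminf$). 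Since $Y_0=\sum_k w_k=1$, Lemma~\ref{lem:robustness} gives $\LowProbI\{\sup_t Y_t\le 1/\delta'\}\ge 1-\delta'$ for any $\delta'>0$; hence almost surely w.r.\ to $\UpProbI$ the process $Y$ stays bounded, which forces, for each $k$,
\[
  \epsilon_k(\Mu^S_t-\Mu^I_t+\Sigma^I_t-\Sigma^{S,I}_t)
  -
  \frac{\epsilon_k^2}{2}\Delta^{S,I}_t
  \le
  \ln\frac{C}{w_k}
\]
for all $t$, where $C$ is the (finite) a.s.\ bound on $Y$.

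The core of the argument is then the \emph{linearization step}. Along the sequence of $I$-stopping times $\sigma_k:=\inf\{t:\Delta^{S,I}_t\ge(1+\kappa)^{2k}\}$, I would evaluate the displayed bound at $t=\sigma_k$ with the matching index, so that $\epsilon_k^2\Delta^{S,I}_{\sigma_k}\approx1$ and the two terms on the left are balanced in the optimization \eqref{eq:min}. Using $\ln(C/w_k)=(1+\delta)\ln k+O(1)$ and $k\sim\frac{1}{2}\log_{1+\kappa}\Delta^{S,I}_{\sigma_k}$, one converts $\ln\ln\Delta^{S,I}$ into $\ln k$ and reads off
\[
  \Mu^S_{\sigma_k}-\Mu^I_{\sigma_k}+\Sigma^I_{\sigma_k}-\Sigma^{S,I}_{\sigma_k}
  \le
  (1+o(1))\sqrt{2\Delta^{S,I}_{\sigma_k}\ln\ln\Delta^{S,I}_{\sigma_k}}
\]
as $k\to\infty$. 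Finally I would interpolate between consecutive $\sigma_k$: because $\Delta^{S,I}$ is continuous and nondecreasing, on $[\sigma_k,\sigma_{k+1}]$ it varies by a factor at most $(1+\kappa)^2$, so the bound at the grid points extends to all large $t$ at the cost of a factor tending to $1$ as $\kappa\downarrow0$. Letting $\kappa\downarrow0$ and $\delta\downarrow0$ (through a countable sequence, intersecting the corresponding a.s.\ events) yields $\limsup_t L_t\le1$, and the symmetric bound on $-L_t$ completes the proof of \eqref{eq:CAPM-LIL}.

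I expect the main obstacle to be the interpolation step, i.e.\ controlling the numerator $\Mu^S_t-\Mu^I_t+\Sigma^I_t-\Sigma^{S,I}_t$ between the grid times $\sigma_k$ rather than only at them. The exponential bound is one-sided and holds for \emph{all} $t$ simultaneously (that is the point of using $\sup_t Y_t$ via Lemma~\ref{lem:robustness}), so no maximal inequality over random intervals is needed; the delicate part is purely the arithmetic of showing that the geometric spacing $(1+\kappa)^{2k}$ of the intrinsic clock $\Delta^{S,I}$ loses only a $(1+o(1))$ factor, which is exactly the familiar bookkeeping in the Khinchin LIL and is where the slow decay of the weights $w_k=k^{-1-\delta}$ is used. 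The promised ``details'' in Corollary~\ref{cor:EP-LIL-weak} refer precisely to this computation, now carried out in the CAPM generality with $\Delta^{S,I}$ in place of $\Sigma^I$.
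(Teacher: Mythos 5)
Your overall strategy---the mixture $Y_t=\sum_k w_k\exp\bigl(\epsilon_k(\Mu^S_t-\Mu^I_t+\Sigma^I_t-\Sigma^{S,I}_t)-\frac{\epsilon_k^2}{2}\Delta^{S,I}_t\bigr)$ over $\epsilon_k=(1+\kappa)^{-k}$ with weights $w_k=k^{-1-\delta}$, its a.s.\ boundedness, the resulting simultaneous exponential bounds, the symmetry in $\pm\epsilon$, and the final passage $\kappa,\delta\downarrow0$---is exactly the paper's. But your ``linearization step'' contains an error that kills the proof: the index you pair with time $t$ is wrong. You choose $k$ with $\epsilon_k^2\Delta^{S,I}_{\sigma_k}\approx1$ and assert that this balances the two terms as in the optimization \eqref{eq:min}. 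It does not: minimizing $A/\epsilon+(\epsilon/2)\Delta$ with $A=\ln(C/w_k)=(1+\delta)\ln k+O(1)$ requires $\epsilon^2\Delta=2A\approx2(1+\delta)\ln k\approx2(1+\delta)\ln\ln\Delta^{S,I}_t$, a quantity tending to infinity, not $\epsilon^2\Delta\approx1$. With your choice the term $A/\epsilon_k$ dominates, and what you actually get at $t=\sigma_k$ is
\[
  \Mu^S_{\sigma_k}-\Mu^I_{\sigma_k}+\Sigma^I_{\sigma_k}-\Sigma^{S,I}_{\sigma_k}
  \le
  \sqrt{\Delta^{S,I}_{\sigma_k}}\,\bigl((1+\delta)\ln k+O(1)\bigr)
  \approx
  (1+\delta)\sqrt{\Delta^{S,I}_{\sigma_k}}\,\ln\ln\Delta^{S,I}_{\sigma_k},
\]
which exceeds the target $\sqrt{2\Delta^{S,I}_{\sigma_k}\ln\ln\Delta^{S,I}_{\sigma_k}}$ by a factor of order $\sqrt{\ln\ln\Delta^{S,I}_{\sigma_k}}\to\infty$. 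So the inequality you claim to ``read off'' does not follow, and the argument yields no finite bound on the $\limsup$ at all.

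The repair is the paper's index selection, its condition \eqref{eq:approximate}: for a given $t$ with $\Delta^{S,I}_t$ large, choose $k$ so that $(1+\kappa)^{-k-2}\le\sqrt{2(1+\delta)\ln k/\Delta^{S,I}_t}\le(1+\kappa)^{-k}$, i.e.\ so that $\epsilon_k$ is within a factor $(1+\kappa)^2$ of the genuine minimizer $\sqrt{2(1+\delta)\ln k/\Delta^{S,I}_t}$. Then the exponential bound gives
$2(1+\kappa)^2\sqrt{(1+\delta)(\ln k+O(1))\frac12\Delta^{S,I}_t}$,
and only now does the conversion from $\ln k$ to $\ln\ln\Delta^{S,I}_t$ (via $k\ln(1+\kappa)\le\frac12\ln\Delta^{S,I}_t$, which follows from the right-hand inequality above) produce $(1+o(1))(1+\kappa)^2\sqrt{1+\delta}\,\sqrt{2\Delta^{S,I}_t\ln\ln\Delta^{S,I}_t}$; letting $\kappa\downarrow0$ and $\delta\downarrow0$ along countable sequences finishes. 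Incidentally, once the index is chosen correctly as a function of $t$, your grid $\sigma_k$ and the interpolation discussion are superfluous: as you yourself note, the exponential bounds hold for all $t$ simultaneously, so you can apply the right one at each $t$ directly, and the only ``interpolation'' cost is the $(1+\kappa)^2$ already incurred in the index selection.
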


\begin{proof}
  We will implement in detail the idea mentioned in the proof of Corollary~\ref{cor:EP-LIL-weak},
  namely, we will mix the processes,
  which quasi-always are test $I$-supermartingales
  (see Lemma~\ref{lem:supermartingale-2}),
  \begin{equation}\label{eq:martingale-1}
    \exp
    \left(
      \epsilon \Mu^S_t
      -\epsilon \Mu^I_t
      +\epsilon \Sigma^I_t
      -\epsilon \Sigma^{S,I}_t
      - \frac{\epsilon^2}{2} \Delta^{S,I}_t
    \right)
  \end{equation}
  and
  \begin{equation*}
    \exp
    \left(
      -\epsilon \Mu^S_t
      +\epsilon \Mu^I_t
      -\epsilon \Sigma^I_t
      +\epsilon \Sigma^{S,I}_t
      - \frac{\epsilon^2}{2} \Delta^{S,I}_t
    \right)
  \end{equation*}
  over $\epsilon>0$ of the form $(1+\kappa)^{-k}$, $k=1,2,\ldots$,
  with weights $w_k=k^{-1-\delta}$.
  We will only prove~\eqref{eq:CAPM-LIL} with the operation of taking the absolute value omitted
  (the proof for the case where it is replaced by negation is analogous),
  and so we will only consider~\eqref{eq:martingale-1}.

  Since $\sum_k w_k$ converges,
  \begin{equation*}
    \sum_k
    w_k
    \exp
    \left(
      \epsilon_k \Mu^S_t
      -\epsilon_k \Mu^I_t
      +\epsilon_k \Sigma^I_t
      -\epsilon_k \Sigma^{S,I}_t
      - \frac{\epsilon_k^2}{2} \Delta^{S,I}_t
    \right)
  \end{equation*}
  is also a test $I$-supermartingale with finite initial capital;
  therefore, it is bounded $\UpProbI$-a.s.,
  and so we have
  \begin{equation*}
    \Mu^S_t-\Mu^I_t+\Sigma^I_t-\Sigma^{S,I}_t
    \le
    \frac{-\ln w_k+O(1)}{\epsilon_k}
    +
    \frac{\epsilon_k}{2} \Delta^{S,I}_t
    \quad
    \text{$\UpProbI$-a.s.},
  \end{equation*}
  i.e.,
  \begin{equation*}
    \Mu^S_t-\Mu^I_t+\Sigma^I_t-\Sigma^{S,I}_t
    \le
    \frac{(1+\delta)\ln k+O(1)}{\epsilon_k}
    +
    \frac{\epsilon_k}{2} \Delta^{S,I}_t
    \quad
    \text{$\UpProbI$-a.s.}
  \end{equation*}
  The value of $\epsilon=\epsilon_k$ that minimizes the right-hand side
  (let's forget for a minute that $\epsilon$ is a function of $k$ and ignore the $O(1)$)
  is
  \[
    \epsilon
    =
    \sqrt{2\frac{(1+\delta)\ln k}{\Delta^{S,I}_t}}.
  \]
  Let us choose $k$ such that this is approximately true, namely,
  \begin{equation}\label{eq:approximate}
    \epsilon_{k+2}
    =
    (1+\kappa)^{-k-2}
    \le
    \sqrt{2\frac{(1+\delta)\ln k}{\Delta^{S,I}_t}}
    \le
    (1+\kappa)^{-k}
    =
    \epsilon_{k};
  \end{equation}
  it is easy to check that such $k$ exist
  when $\Delta^{S,I}_t$ is sufficiently large.
  This gives us
  \begin{equation}\label{eq:almost-there}
    \Mu^S_t-\Mu^I_t+\Sigma^I_t-\Sigma^{S,I}_t
    \le
    2(1+\kappa)^2
    \sqrt{(1+\delta)(\ln k+O(1))\frac12 \Delta^{S,I}_t}
    \quad
    \text{$\UpProbI$-a.s.}
  \end{equation}
  The right-hand inequality in~\eqref{eq:approximate} can be rewritten as
  \begin{equation*}
    k\ln(1+\kappa)
    \le
    \frac12
    \left(
      \ln \Delta^{S,I}_t - \ln 2 - \ln(1+\delta) - \ln\ln k
    \right)
    \le
    \frac12
    \ln \Delta^{S,I}_t
  \end{equation*}
  (for large $k$),
  and plugging this into~\eqref{eq:almost-there} gives
  \begin{multline*}
    \Mu^S_t-\Mu^I_t+\Sigma^I_t-\Sigma^{S,I}_t\\
    \le
    2(1+\kappa)^2
    \sqrt{(1+\delta)\left(\ln\frac12+\ln\ln \Delta^{S,I}_t-\ln\ln(1+\kappa)+O(1)\right)\frac12 \Delta^{S,I}_t}\\
    \text{$\UpProbI$-a.s.}
  \end{multline*}
  It remains to mix over sequences of $\kappa\to0$ and $\delta\to0$.
\end{proof}

\subsection*{Theoretical performance deficit}

Using Lemma~\ref{lem:Mu} we can rewrite~\eqref{eq:CAPM-optimized} as
\begin{equation*}
  \LowProbI
  \left\{
    \left|
      \ln S_{\tau_T} - \ln I_{\tau_T} + \frac12 \Sigma^S_{\tau_T} + \frac12 \Sigma^I_{\tau_T} - \Sigma^{S,I}_{\tau_T}
    \right|
    <
    \sqrt{2T\ln\frac{2}{\delta}}
  \right\}
  \ge
  1-\delta.
\end{equation*}
And using Lemma~\ref{lem:VSI} we can rewrite the last inequality as
\begin{equation*}
  \LowProbI
  \left\{
    \left|
      \ln S_{\tau_T} - \ln I_{\tau_T} + \frac12 \Delta^{S,I}_{\tau_T}
    \right|
    <
    \sqrt{2T\ln\frac{2}{\delta}}
  \right\}
  \ge
  1-\delta.
\end{equation*}
Therefore, for large $T$ the EIH for $I$ implies
\[
  \ln S(\tau_T) \approx \ln I(\tau_T) - \frac12 \Delta^{S,I}_{\tau_T}
\]
(the $\approx$ assumes that $\ln S(\tau_T)$, $\ln I(\tau_T)$, and $\Delta^{S,I}_{\tau_T}$
grow linearly in $T$,
and so $\sqrt{T}$ can be regarded as small).
The subtrahend $\frac12 \Delta^{S,I}_{\tau_T}$ can be interpreted
as measuring the lack of diversification in $S$ as compared with $I$;
we call it the \emph{theoretical performance deficit}.

Lemma~\ref{lem:supermartingale-2}, Corollaries~\ref{cor:1}--\ref{cor:2}, and Proposition~\ref{prop:CAPM-LIL}
can also be stated in terms of the theoretical performance deficit.

\subsection*{Acknowledgments}

This research was supported by the Air Force Office of Scientific Research
(grant FA9550-14-1-0043).

\newcommand{\noopsort}[1]{}

\end{document}